\newtheorem{remark}{Remark}
\newtheorem{definition}{Definition}
\newtheorem{theorem}{Theorem}
\newtheorem{proposition}{Proposition}
\newtheorem{lemma}{Lemma}
\newtheorem{example}{Example}
\newtheorem{corollary}{Corollary}
\newtheorem{assumption}{Assumption}
\newcommand{\oomit}[1]{}
\begin{document}
\title{Sufficient and Necessary Barrier-like Conditions for Safety and Reach-avoid Verification of Stochastic Discrete-time Systems
}
\author{Bai Xue
 \thanks{Key Laboratory of System Software (Chinese Academy of Sciences) and State Key Laboratory of Computer Science, Institute of Software, Chinese Academy of Sciences, Beijing, China (xuebai@ios.ac.cn).}
}

\maketitle
\thispagestyle{empty}
\pagestyle{empty}

\begin{abstract}
This paper investigates necessary and sufficient barrier-like conditions for infinite-horizon safety and reach-avoid verification of stochastic discrete-time systems, derived via a relaxation of the Bellman equations. Unlike prior approaches that primarily focus on sufficient conditions, our work rigorously establishes both necessity and sufficiency for infinite-horizon properties. Safety verification concerns certifying that, starting from a given initial state, the system remains within a safe set at all future time steps with probability at least equal to a specified threshold. For this purpose, we formulate a necessary and sufficient barrier-like condition that captures this infinite-time safety property. In contrast, reach-avoid verification generalizes safety verification by also incorporating reachability. Specifically, it aims to ensure that the probability of the system, starting from a given initial state, eventually reaching a target set while remaining within the safe set until the first hit of the target is no less than a prescribed bound. Under suitable assumptions, we establish two necessary and sufficient barrier-like conditions for this reach-avoid specification.
\end{abstract}

\section{Introduction}
\label{intro}
Temporal verification is crucial in modern systems analysis, particularly in complex systems where temporal behavior is of paramount importance \cite{prajna2007convex}. It involves rigorously examining a system's adherence to temporal properties, including safety and reach-avoid guarantees, to ensure desired outcomes and avoid undesirable events. Formal methods like model checking \cite{clarke1997model} and theorem proving \cite{manna2012temporal} are indispensable tools in this process, allowing for precise and comprehensive analysis of temporal specifications.

Over the past two decades, barrier certificates have become a powerful tool for safety and reach-avoid verification of dynamical systems. These certificates provide Lyapunov-like guarantees regarding system behavior. The existence of a barrier certificate alone is sufficient to establish the satisfiability of safety and reach-avoid specifications, as demonstrated in \cite{prajna2007convex}. This simplifies the verification process and provides a formal mathematical framework for ensuring the safety and correctness of a system without needing to explicitly evolve it over time. With advances in polynomial optimization, particularly sum-of-squares polynomial optimization, barrier certificates can be computed through convex optimization, especially when the system of interest is polynomial. This further motivates the development of barrier certificate-based methods.

On the other hand, converse theorems for barrier certificates, which focus on the existence of such certificates, have significantly contributed to understanding how safety and reach-avoid criteria can be represented by barrier certificates. These concepts have garnered growing interest since the inception of barrier certificates and have been further investigated in \cite{prajna2005necessity, prajna2007convex, wisniewski2015converse, ratschan2018converse, liu2021converse}. However, there remains a scarcity of research exploring the existence of barrier certificates for stochastic dynamical systems. This work aims to fill this gap.

By relaxing Bellman equations, this paper derives necessary and sufficient barrier-like conditions for verifying safety and reach-avoid properties in stochastic discrete-time systems over infinite-time horizons. The safety verification process involves assessing whether the safety probability that a system, starting from an initial state, will stay within a safe set for all time is greater than or equal to a specified threshold. By relaxing a Bellman equation, one of whose solutions characterizes the exact safety probability, we construct a necessary and sufficient  barrier-like condition for safety verification. On the other hand, the reach-avoid verification concerns verifying whether the reach-avoid probability that the system, starting from an initial state, will enter a target set eventually while avoiding unsafe sets before hitting the target, is greater than or equal to a specified threshold. We consider two cases for the reach-avoid verification. In the first case, we assume that, for every state in the safe set but not in the target set, the system will almost surely either reach the target set or exit the safe set in finite time. Under this context, by relaxing a Bellman equation, which possesses a unique bounded solution that characterizes the exact reach-avoid probability, we construct a necessary and sufficient  barrier-like condition for the reach-avoid verification. In the second case, we assume that the specified threshold is strictly smaller than the exact reach-avoid probability. Under this context, by relaxing a Bellman equation featuring a unique bounded solution that provides a lower bound of the exact reach-avoid probability, we construct a necessary and sufficient  barrier-like condition for the reach-avoid verification.

\section*{Related Work}

Barrier certificates were initially proposed for deterministic systems as a formal approach to safety verification \cite{prajna2004safety}. Subsequent efforts have focused on adapting and enhancing these functions, broadening their applications \cite{kong2013exponential, ames2019control, ghaffari2018safety, anand2021safety}. However, many real-world applications are subject to stochastic disturbances, prompting the modeling of these systems as stochastic. In the continuous-time stochastic setting, safety verification over the infinite-time horizon using barrier certificates was introduced alongside its deterministic counterpart in \cite{prajna2007framework}. Based on Ville's Inequality \cite{ville1939etude} and a stopped process, \cite{prajna2007framework} developed a non-negative barrier function and established a sufficient condition for safety verification, certifying upper bounds on the probabilities of entering an unsafe region from specific initial states. This ensures that the system remains within the interior of a state-constrained set until its first encounter with the unsafe set. Building on \cite{kushner1967stochastic}, \cite{santoyo2021barrier} formulated a sufficient barrier-like condition for upper-bounding the probability of entering an unsafe region from certain initial states within finite-time frames. The systems in \cite{santoyo2021barrier} include both continuous-time and discrete-time systems. Notably, when the state-constrained set is a robust invariant (i.e., systems initialized within it remain within it under all disturbances) and the unsafe set is a subset of the invariant set, sufficient barrier-like conditions for safety verification of stochastic discrete-time systems were studied in \cite{anand2022k, zhi2024unifying}. Another commonly studied safety property is related to set invariance. This involves justifying lower bounds of safety probabilities, either over an infinite time horizon (i.e., ensuring the system remains within a specified safe set for all time) or finite time horizons (i.e., ensuring the system stays within a given safe set during a specified time period) \cite{abate2008probabilistic}. In other words, it involves justifying upper bounds on the exit probabilities—i.e., the probability that the system will eventually exit a specified safe set or do so within a bounded time horizon. To address this, sufficient barrier-like conditions have been developed for safety verification over both finite-time horizons (e.g., \cite{steinhardt2012finite, laurenti2023unifying, cosner2023robust, cosner2024bounding}) and infinite-time horizons (e.g., \cite{yu2023safe}). Following this, control barrier functions were explored for synthesizing controllers to guarantee safety in \cite{sarkar2020high, wang2021safety}. While finite-time verification suffices for systems with bounded operational horizons, we emphasize the importance of infinite-time safety guarantees—essential for systems requiring robustness against indefinite or unpredictable mission lifetimes. Importantly, the proposed method can also be applied to safety verification over an infinite time horizon, as described in \cite{prajna2007framework}.

Regarding reach-avoid verification, a new sufficient barrier-like condition was proposed in \cite{xue2021reach} for the reach-avoid analysis of stochastic discrete-time dynamical systems over an infinite-time horizon. This condition was later extended to stochastic continuous-time dynamical systems in \cite{xue2024}. The condition is constructed by relaxing a set of equations, whose solution characterizes the exact reach-avoid probability of eventually entering a desired target set from an initial state while maintaining safety constraints. In addition, another barrier-like function, called reach-avoid supermartingales, was introduced in \cite{vzikelic2023learning, vzikelic2023compositional} to guarantee reach-avoid specifications and facilitate controller synthesis for stochastic discrete-time systems. This framework assumes that the system evolves within a robust invariant set, with both the unsafe set and target set confined within this invariant domain. However, this strong assumption limits the applicability of the framework, as many systems do not possess compact robust invariant sets, as discussed in \cite{yu2023safe}. These barrier-like conditions aim to lower bound reach-avoid probabilities, as explored in \cite{xue2021reach, vzikelic2023learning, vzikelic2023compositional, xue2024}.

This paper is structured as follows: Section \ref{sec:pre} introduces the stochastic discrete-time systems of interest and formulates the safety and reach-avoid verification problems. Section \ref{sec:rav} presents the necessary and sufficient barrier-like conditions for safety verification. Section \ref{sec:lb} follows with the necessary and sufficient barrier-like conditions for reach-avoid verification. Section \ref{sec:ex} presents two numerical examples that demonstrate the effectiveness of the proposed barrier-like conditions. Finally, Section \ref{sec:con} concludes the paper.

\section{Preliminaries}
\label{sec:pre}
We start the exposition by a formal introduction of stochastic discrete-time systems and safety/reach-avoid verification problems of interest. Before posing the problem studied, let us introduce some basic notions used throughout this paper: $\mathbb{R}$ denotes the set of real values; $\mathbb{N}$ denotes the set of nonnegative integers; $\mathbb{N}_{\leq k}$ is the set of non-negative integers being less than or equal to $k$; $\mathbb{N}_{\geq k}$ is the set of non-negative integers being larger than or equal to $k$; for sets $\Delta_1$ and $\Delta_2$, $\Delta_1\setminus \Delta_2$ denotes the difference of sets $\Delta_1$ and $\Delta_2$, which is the set of all elements in $\Delta_1$ that are not in $\Delta_2$;  $1_A(\bm{x})$ denotes the indicator function in the set $A$,
where, if $\bm{x}\in A$, then $1_A(\bm{x}) = 1$ and if $\bm{x}\notin A$, $1_A(\bm{x}) = 0$.  

\subsection{Problem Statement}

This paper considers stochastic discrete-time systems that are modeled by stochastic difference equations of the following form:
\begin{equation}
\label{system}
\begin{split}
\bm{x}(l+1)=\bm{f}(\bm{x}(l),\bm{\theta}(l)), \forall l\in \mathbb{N},
\end{split}
\end{equation}
where $\bm{x}(l) \in \mathbb{R}^n$ is the state at time $l$ and $\bm{\theta}(l)\in \Theta$ with $\Theta \subseteq \mathbb{R}^m$ is the stochastic disturbance at time $l$. In addition, let $\bm{\theta}(0), \bm{\theta}(1), \ldots$ be i.i.d. (independent and identically distributed) random variables on a probability space $(\Theta,\mathcal{F},\mathbb{P}_{\bm{\theta}})$, and take values in $\Theta$ with the following probability distribution: for any measurable set $B\subseteq \Theta$, 
\[{\rm Prob}(\bm{\theta}(l)\in B)=\mathbb{P}_{\bm{\theta}}(B), \quad\forall l\in \mathbb{N}.\]
The corresponding expectation is denoted as $\mathbb{E}_{\bm{\theta}}[\cdot]$.

Before defining the trajectory of system \eqref{system}, we define a disturbance signal.
\begin{definition}
A disturbance signal $\pi$ is a sample path of the stochastic process ${\boldsymbol{\theta}(i): \Theta \to \Theta, i \in \mathbb{N}}$, defined on the canonical sample space $\Theta^{\infty}$ equipped with the product topology and Borel $\sigma$-algebra $\mathcal{B}(\Theta^{\infty})$. The probability measure $\mathbb{P}{\pi} := \mathbb{P}_{\boldsymbol{\theta}}^{\infty}$ is the product measure on $\Theta^{\infty}$ induced by the i.i.d. disturbances $\boldsymbol{\theta}(0), \boldsymbol{\theta}(1), \dots$: $\mathbb{P}_{\pi}=\bigotimes_{i=0}^{\infty} \mathbb{P}_{\boldsymbol{\theta}}$,
where $\mathbb{P}_{\bm{\theta}}(B)={\rm Prob}(\bm{\theta}(i)\in B)$ for measurable $B\subseteq \Theta$. The expectation $\mathbb{E}_{\pi}[\cdot]$ is defined with respect to $\mathbb{P}_{\pi}$.
\end{definition}

A disturbance signal $\pi$ together with an initial state $\bm{x}_0\in \mathbb{R}^n$ induces a unique discrete-time trajectory as follows.
\begin{definition}
Given a disturbance signal $\pi$ and an initial state $\bm{x}_0\in \mathbb{R}^n$, a trajectory of system \eqref{system} is denoted as  $\bm{\phi}_{\pi}^{\bm{x}_0}(\cdot)\colon\mathbb{N}\rightarrow \mathbb{R}^n$ with $\bm{\phi}_{\pi}^{\bm{x}_0}(0)=\bm{x}_0$, i.e.,
\[\bm{\phi}_{\pi}^{\bm{x}_0}(l+1)=\bm{f}(\bm{\phi}_{\pi}^{\bm{x}_0}(l),\bm{\theta}(l)), \forall l\in \mathbb{N}.\]
\end{definition}

The safety and reach-avoid verification for the system \eqref{system} over the infinite-time horizon are defined below.

\begin{definition}[Safety Verification]
\label{safe}
Given a safe set $\mathcal{X}\subseteq \mathbb{R}^n$, an initial state $\bm{x}_0$, and a lower bound $\epsilon_1 \in [0,1]$, the safety verification aims to certify that the safety probability $\mathbb{P}_{\pi}(S_{\bm{x}_0})$, which denotes the probability that the system \eqref{system}, starting from the initial state $\bm{x}_0$, will stay within the safe set $\mathcal{X}$ for all time, is greater than or equal to $\epsilon_1$, i.e., 
 \[\mathbb{P}_{\pi}(S_{\bm{x}_0})\geq \epsilon_1,\]
 where $S_{\bm{x}_0}=\{\pi \mid \forall i\in \mathbb{N}.  \bm{\phi}_{\pi}^{\bm{x}_0}(i)\in \mathcal{X}\}$.
\end{definition}

\begin{definition}[Reach-avoid Verification]
\label{reach-avoid}
Given a safe set $\mathcal{X}\subseteq \mathbb{R}^n$, an initial state $\bm{x}_0\in \mathcal{X}\setminus \mathcal{X}_r$, a target set $\mathcal{X}_r\subseteq \mathcal{X}$, and a lower bound $\epsilon_2 \in [0,1]$, the reach-avoid verification aims to certify that the reach-avoid probability $\mathbb{P}_{\pi}(RA_{\bm{x}_0})$, which denotes the probability that system \eqref{system}, starting from the initial state $\bm{x}_0$, will reach the target set $\mathcal{X}_r$ eventually while staying within the safe set $\mathcal{X}$, is greater than or equal to $\epsilon_2$, i.e., 
 \[\mathbb{P}_{\pi}(RA_{\bm{x}_0})\geq \epsilon_2,\]
 where $RA_{\bm{x}_0}=\{\pi \mid \exists k\in \mathbb{N}. \bm{\phi}_{\pi}^{\bm{x}_0}(k)\in \mathcal{X}_r \wedge \forall i\in \mathbb{N}_{\leq k}.  \bm{\phi}_{\pi}^{\bm{x}_0}(i)\in \mathcal{X}\}$.
\end{definition}

In the sequel, we will formulate necessary and sufficient barrier-like conditions for certifying $\epsilon_1\leq \mathbb{P}_{\pi}(S_{\bm{x}_0})$. We note that the method can also be used to construct necessary and sufficient conditions for the safety verification scenario in \cite{prajna2007framework}, which involves certifying upper bounds of the probability that the system eventually enters unsafe sets from an initial state while adhering to state-constrained sets. Please refer to Remark \ref{safety_reach} in Subsection \ref{rv_1}. Moreover, under certain assumptions, we will formulate necessary and sufficient  barrier-like conditions for certifying $\mathbb{P}_{\pi}(RA_{\bm{x}_0}) \geq \epsilon_2$.

\section{Safety Verification}
\label{sec:rav}
This section introduces necessary and sufficient  barrier-like conditions for certifying lower bounds in safety verification and will detail their construction process. The construction involves constructing and relaxing a Bellman equation, one of whose solutions characterizes the exact safety probability $\mathbb{P}_{\pi}(S_{\bm{x}})$ for $\bm{x}\in \mathbb{R}^n$. The Bellman equation is derived from a value function.

We begin by introducing the value function $V(\cdot):
\mathbb{R}^n \rightarrow \mathbb{R}$, which characterizes the exact safety probability $\mathbb{P}_{\pi}(S_{\bm{x}})$ for each state $\bm{x}\in \mathbb{R}^n$,
\begin{equation}
    \label{value_safe}
    \begin{split}
    V(\bm{x}):=&\mathbb{E}_{\pi}\big[ g(\bm{x})\big],
    \end{split}
\end{equation}
where \[g(\bm{x})=1_{\mathbb{R}^n\setminus \mathcal{X}}(\bm{\phi}_{\pi}^{\bm{x}}(0))+\sum_{i\in \mathbb{N}_{\geq 1}}\prod_{j=0}^{i-1} 1_{\mathcal{X}}(\bm{\phi}_{\pi}^{\bm{x}}(j)) 1_{\mathbb{R}^n\setminus \mathcal{X}}(\bm{\phi}_{\pi}^{\bm{x}}(i)).\]

\begin{lemma}
\label{safe_equal}
The value function $V(\bm{x})$ in \eqref{value_safe} is equal to one minus the safety probability $\mathbb{P}(S_{\bm{x}})$, i.e., 
\[V(\bm{x})=1-\mathbb{P}_{\pi}(S_{\bm{x}})\]
for $\bm{x}\in \mathbb{R}^n$.
\end{lemma}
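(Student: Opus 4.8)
The plan is to recognize the random variable $g(\bm{x})$ as a pathwise indicator of the event that the trajectory ever leaves the safe set, and then take expectations. First I would fix a disturbance signal $\pi$ and examine each summand of $g(\bm{x})$ along the induced trajectory $\bm{\phi}_{\pi}^{\bm{x}}(\cdot)$. The leading term $1_{\mathbb{R}^n\setminus\mathcal{X}}(\bm{\phi}_{\pi}^{\bm{x}}(0))$ equals one exactly when the first exit from $\mathcal{X}$ occurs at time $0$, while for $i\geq 1$ the summand $\prod_{j=0}^{i-1} 1_{\mathcal{X}}(\bm{\phi}_{\pi}^{\bm{x}}(j))\, 1_{\mathbb{R}^n\setminus\mathcal{X}}(\bm{\phi}_{\pi}^{\bm{x}}(i))$ equals one exactly when the trajectory stays in $\mathcal{X}$ through time $i-1$ and leaves at time $i$; that is, when the first exit time equals $i$.

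The key observation is that the events ``first exit at time $i$'', indexed by $i\in\mathbb{N}$, are pairwise disjoint. Consequently, along any fixed $\pi$ at most one summand of $g(\bm{x})$ is nonzero. If $\pi\in S_{\bm{x}}$, i.e.\ the trajectory never leaves $\mathcal{X}$, then every factor $1_{\mathbb{R}^n\setminus\mathcal{X}}(\cdot)$ vanishes and $g(\bm{x})=0$; otherwise there is a unique smallest index $i^{\ast}$ with $\bm{\phi}_{\pi}^{\bm{x}}(i^{\ast})\notin\mathcal{X}$, and the corresponding summand contributes exactly one while all others vanish, so $g(\bm{x})=1$. Hence, as a random variable on $\Theta^{\infty}$,
\[
g(\bm{x}) = 1_{\Theta^{\infty}\setminus S_{\bm{x}}}.
\]

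Taking expectation under $\mathbb{P}_{\pi}$ then yields $V(\bm{x}) = \mathbb{E}_{\pi}[1_{\Theta^{\infty}\setminus S_{\bm{x}}}] = \mathbb{P}_{\pi}(\Theta^{\infty}\setminus S_{\bm{x}}) = 1 - \mathbb{P}_{\pi}(S_{\bm{x}})$, which is the claimed identity. Because every summand of $g(\bm{x})$ is nonnegative, interchanging the infinite sum with the expectation is justified by the monotone convergence theorem (equivalently, Tonelli), so no delicate convergence argument is required.

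I expect the only real subtlety, rather than a genuine obstacle, to be the careful bookkeeping showing that $g(\bm{x})$ takes values in $\{0,1\}$ pathwise — specifically that the disjointness of the first-exit events forces at most one nonzero summand — together with the (routine) measurability of these events, which holds provided $\bm{f}$ is measurable so that each $\bm{\phi}_{\pi}^{\bm{x}}(i)$ is a measurable function of $\pi$.
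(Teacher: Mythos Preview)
Your proposal is correct and follows essentially the same approach as the paper: both identify the $i$th summand of $g(\bm{x})$ with the event that the first exit from $\mathcal{X}$ occurs at time $i$, so that the total equals the probability of ever leaving $\mathcal{X}$, i.e.\ $1-\mathbb{P}_{\pi}(S_{\bm{x}})$. The only cosmetic difference is that you establish the pathwise identity $g(\bm{x})=1_{\Theta^\infty\setminus S_{\bm{x}}}$ before taking expectations, whereas the paper takes expectations term by term and sums the resulting probabilities; your version is arguably tidier and makes the exchange of sum and expectation explicit.
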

\begin{proof}  
By definition, $\mathbb{E}_{\pi}[1_{\mathbb{R}^n\setminus \mathcal{X}}(\bm{\phi}_{\pi}^{\bm{x}}(0))]=1_{\mathbb{R}^n\setminus \mathcal{X}}(\bm{x})$ holds. Furthermore, since
\begin{equation*}
    \begin{split}
        \mathbb{E}_{\pi}[\prod_{j=0}^{i-1} 1_{\mathcal{X}}(\bm{\phi}_{\pi}^{\bm{x}}(j)) 1_{\mathbb{R}^n\setminus \mathcal{X}}(\bm{\phi}_{\pi}^{\bm{x}}(i))]=\mathbb{P}_{\pi}(\wedge_{j=0}^{i-1}[\bm{\phi}_{\pi}^{\bm{x}}(j)\in \mathcal{X}] \wedge [\bm{\phi}_{\pi}^{\bm{x}}(i)\in \mathbb{R}^n\setminus \mathcal{X}])
    \end{split}
\end{equation*} is the probability that the system \eqref{system} starting from $\bm{x}$ will exit the safe set $\mathcal{X}$ at time $t=i$ while stay within $\mathcal{X}$ before $i$, where $i\in \mathbb{N}_{\geq 1}$, we have 
\begin{equation*}
\begin{split}
&\mathbb{E}_{\pi}[1_{\mathbb{R}^n\setminus \mathcal{X}}(\bm{\phi}_{\pi}^{\bm{x}}(0))]+\sum_{i\in \mathbb{N}_{\geq 1}}\mathbb{E}_{\pi}[\prod_{j=0}^{i-1} 1_{\mathcal{X}}(\bm{\phi}_{\pi}^{\bm{x}}(j)) 1_{\mathbb{R}^n\setminus \mathcal{X}}(\bm{\phi}_{\pi}^{\bm{x}}(i))]\\
&=\mathbb{P}_{\pi}(\bm{\phi}_{\pi}^{\bm{x}}(0)\in \mathbb{R}^n\setminus \mathcal{X})+\sum_{i\in \mathbb{N}_{\geq 1}} \mathbb{P}_{\pi}(\wedge_{j=0}^{i-1}[\bm{\phi}_{\pi}^{\bm{x}}(j)\in \mathcal{X}] \wedge [\bm{\phi}_{\pi}^{\bm{x}}(i)\in \mathbb{R}^n\setminus \mathcal{X}])\\
&=\mathbb{P}_{\pi}(\exists i\in \mathbb{N}. \bm{\phi}_{\pi}^{\bm{x}}(i)\in \mathbb{R}^n\setminus \mathcal{X}).
\end{split}
\end{equation*}
Thus, $\mathbb{E}_{\pi}[g(\bm{x})]=\mathbb{P}_{\pi}(\exists i\in \mathbb{N}. \bm{\phi}_{\pi}^{\bm{x}}(i)\in \mathbb{R}^n\setminus \mathcal{X})$. Consequently, $\mathbb{P}_{\pi}(S_{\bm{x}})=1-V(\bm{x})$. 
\end{proof}

According to Lemma \ref{safe_equal}, $V(\bm{x})$ falls within [0,1] for $\bm{x}\in \mathbb{R}^n$ and thus it is bounded over $\mathbb{R}^n$. We next will show that the value function \eqref{value_safe} can be reduced to a bounded solution to a Bellman equation (or, dynamic programming equation) via the dynamic programming principle. A value function characterizes the exact safety probability over finite-time horizons and its related dynamic programming equations can be found in \cite{abate2008probabilistic,laurenti2023unifying}.
\begin{proposition}
\label{pro_bellman1}
  The value function $V(\cdot):\mathbb{R}^n \rightarrow \mathbb{R}$ in \eqref{value_safe}
 satisfies the following Bellman equation
 \begin{equation}
 \label{bellman_safe}
     V(\bm{x})=1_{\mathbb{R}^n\setminus \mathcal{X}}(\bm{x})+ 1_{\mathcal{X}}(\bm{x}) \mathbb{E}_{\bm{\theta}}[V(\bm{f}(\bm{x},\bm{\theta}))]
 \end{equation}
for $\bm{x}\in \mathbb{R}^n$.
 \end{proposition}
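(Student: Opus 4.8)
The plan is to verify the identity \eqref{bellman_safe} pointwise in $\bm{x}$, splitting into the two cases singled out by the indicator functions on the right-hand side. First I would dispose of the case $\bm{x}\in \mathbb{R}^n\setminus \mathcal{X}$, where the right-hand side equals $1$. Since $\bm{\phi}_{\pi}^{\bm{x}}(0)=\bm{x}\notin \mathcal{X}$, the leading term $1_{\mathbb{R}^n\setminus \mathcal{X}}(\bm{\phi}_{\pi}^{\bm{x}}(0))$ equals $1$, whereas every summand with $i\in\mathbb{N}_{\geq 1}$ contains the factor $1_{\mathcal{X}}(\bm{\phi}_{\pi}^{\bm{x}}(0))=0$ and hence vanishes. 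Thus $g(\bm{x})\equiv 1$ for every disturbance signal, and taking expectations gives $V(\bm{x})=1$, matching the right-hand side.

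The substantive case is $\bm{x}\in \mathcal{X}$, where the right-hand side reduces to $\mathbb{E}_{\bm{\theta}}[V(\bm{f}(\bm{x},\bm{\theta}))]$. My strategy is to peel off the first time step by a shift (Markov) argument. Writing $\sigma\pi=\{\bm{\theta}(i+1),i\in\mathbb{N}\}$ for the one-step shift of $\pi$ and setting $\bm{y}=\bm{f}(\bm{x},\bm{\theta}(0))=\bm{\phi}_{\pi}^{\bm{x}}(1)$, the defining recursion of the trajectory yields the shift identity $\bm{\phi}_{\pi}^{\bm{x}}(j+1)=\bm{\phi}_{\sigma\pi}^{\bm{y}}(j)$ for all $j\in\mathbb{N}$. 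Since $\bm{x}\in\mathcal{X}$, the $i=0$ term of $g(\bm{x})$ vanishes and the factor $1_{\mathcal{X}}(\bm{\phi}_{\pi}^{\bm{x}}(0))=1$ may be dropped from every product. Reindexing the remaining series by $k=i-1$ and substituting the shift identity, I expect to recognize $g(\bm{x})$ (as a function of $\pi$) as precisely $g(\bm{y})$ evaluated along the shifted signal $\sigma\pi$; care is needed here with the empty product at $k=0$, which correctly reproduces the leading indicator term $1_{\mathbb{R}^n\setminus\mathcal{X}}(\bm{y})$ of $g(\bm{y})$.

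It then remains to take expectations and factor them. Because the disturbances are i.i.d., the first disturbance $\bm{\theta}(0)$ is independent of the tail $\sigma\pi$, and $\sigma\pi$ is distributed as $\pi$. Conditioning on $\bm{\theta}(0)$ and invoking the tower property, the inner expectation over $\sigma\pi$ of the shifted $g$ evaluated at the now-fixed point $\bm{y}=\bm{f}(\bm{x},\bm{\theta}(0))$ equals $V(\bm{y})$ by definition \eqref{value_safe}, and the outer expectation over $\bm{\theta}(0)$ delivers $\mathbb{E}_{\bm{\theta}}[V(\bm{f}(\bm{x},\bm{\theta}))]$. The boundedness $V(\bm{x})\in[0,1]$ recorded after Lemma \ref{safe_equal} ensures all expectations are finite and legitimizes interchanging expectation with the nonnegative series (via Tonelli). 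The main obstacle I anticipate is exactly this measure-theoretic bookkeeping---justifying the conditional factorization using independence of $\bm{\theta}(0)$ from $\sigma\pi$ and the sum--expectation interchange---rather than any algebraic difficulty; once the shift identity is in place, the remaining computation is routine.
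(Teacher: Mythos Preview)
Your proposal is correct and follows essentially the same approach as the paper: both rewrite $g(\bm{x})$ in terms of the trajectory started from $\bm{y}=\bm{f}(\bm{x},\bm{\theta}(0))$ via the one-step shift identity, then use the i.i.d.\ structure of the disturbances to factor the expectation into $\mathbb{E}_{\bm{\theta}}[V(\bm{f}(\bm{x},\bm{\theta}))]$. The only cosmetic difference is that the paper handles both cases simultaneously through the indicator decomposition $g(\bm{x})=1_{\mathbb{R}^n\setminus\mathcal{X}}(\bm{x})+1_{\mathcal{X}}(\bm{x})\,g(\bm{y})$ rather than splitting explicitly, and it omits the measure-theoretic bookkeeping (Tonelli, conditioning) that you rightly flag.
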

 \begin{proof}
 Since $g(\bm{x})=1_{\mathbb{R}^n\setminus \mathcal{X}}(\bm{x})+1_{\mathcal{X}}(\bm{x})(1_{\mathbb{R}^n\setminus \mathcal{X}}(\bm{\phi}_{\pi}^{\bm{y}}(0))+\sum_{i\in \mathbb{N}_{\geq 1}}\prod_{j=0}^{i-1} 1_{\mathcal{X}}(\bm{\phi}_{\pi}^{\bm{y}}(j))1_{\mathbb{R}^n\setminus \mathcal{X}}(\bm{\phi}_{\pi}^{\bm{y}}(i)))$, we have
 \begin{equation*}
     \begin{split}
         V(\bm{x})=&1_{\mathbb{R}^n\setminus \mathcal{X}}(\bm{x})+1_{\mathcal{X}}(\bm{x})\mathbb{E}_{\pi}\left[
         1_{\mathbb{R}^n\setminus \mathcal{X}}(\bm{y})+\sum_{i\in \mathbb{N}_{\geq 1}}\prod_{j=0}^{i-1} 1_{\mathcal{X}}(\bm{\phi}_{\pi}^{\bm{y}}(j)) 1_{\mathbb{R}^n\setminus \mathcal{X}}(\bm{x}(i))\right]\\
         =&1_{\mathbb{R}^n\setminus \mathcal{X}}(\bm{x})+1_{\mathcal{X}}(\bm{x})\mathbb{E}_{\bm{\theta}}\left[1_{\mathbb{R}^n\setminus \mathcal{X}}(\bm{y})+\mathbb{E}_{\pi}[\sum_{i\in \mathbb{N}_{\geq 1}}\prod_{j=0}^{i-1} 1_{\mathcal{X}}(\bm{\phi}_{\pi}^{\bm{y}}(j)) 1_{\mathbb{R}^n\setminus \mathcal{X}}(\bm{x}(i))]\right]\\
        =&1_{\mathbb{R}^n\setminus \mathcal{X}}(\bm{x})+1_{\mathcal{X}}(\bm{x})\mathbb{E}_{\bm{\theta}}[V(\bm{y})]\\
         =&1_{\mathbb{R}^n\setminus \mathcal{X}}(\bm{x})+1_{\mathcal{X}}(\bm{x})\mathbb{E}_{\bm{\theta}}[V(\bm{f}(\bm{x},\bm{\theta}))],\\
     \end{split}
 \end{equation*}
 where  $\bm{y}=\bm{\phi}_{\pi}^{\bm{x}}(1)=\bm{f}(\bm{x},\bm{\theta})$.
 \end{proof}

It is observed that the Bellman equation \eqref{bellman_safe} may have multiple bounded solutions, since 
\[V'(\bm{x}):=V(\bm{x})+C\mathbb{E}_{\pi}[\prod_{j\in \mathbb{N}}1_{\mathcal{X}}(\bm{\phi}_{\pi}^{\bm{x}}(j))]\] also satisfies the equation \eqref{bellman_safe}, where $C$ is a constant and  $\mathbb{E}_{\pi}[\prod_{j\in \mathbb{N}}1_{\mathcal{X}}(\bm{\phi}_{\pi}^{\bm{x}}(j))]$ equals the safety probability that the system \eqref{system} starting from $\bm{x}$ will stay within the set $\mathcal{X}$ for all time.  Specially, when $C=1$, $V'(\bm{x})=1$ for $\bm{x}\in \mathbb{R}^n$ satisfies the Bellman equation \eqref{bellman_safe}.

A necessary and sufficient  barrier-like condition for certifying lower bounds in the safety verification can be derived via relaxing the Bellman equation \eqref{bellman_safe}.

\begin{theorem}
\label{upper_condition}
 Let $\epsilon_1\in [0,1]$.  There exists a function $v(\bm{x}): \mathbb{R}^n \rightarrow \mathbb{R}$ satisfying the following barrier-like condition:
 \begin{equation}
 \label{constraints1_safe}
     \begin{cases}
         v(\bm{x}_0)\leq 1-\epsilon_1, \\
         v(\bm{x}) \geq \mathbb{E}_{\bm{\theta}}[v(\bm{f}(\bm{x},\bm{\theta}))], & \forall \bm{x}\in \mathcal{X},\\
         v(\bm{x})\geq 1, &\forall \bm{x}\in \mathbb{R}^n\setminus \mathcal{X}, \\
         v(\bm{x}) \geq 0, & \forall \bm{x}\in \mathbb{R}^n,
     \end{cases}
 \end{equation}
 if and only if 
 $\mathbb{P}_{\pi}(S_{\bm{x}_0})\geq \epsilon_1$.
\end{theorem}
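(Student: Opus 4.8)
The plan is to prove the two directions of the equivalence separately, using the value function $V$ from \eqref{value_safe} as the bridge between the barrier-like condition and the liveness probability. Throughout I would exploit Lemma \ref{safe_equal}, which gives $V(\bm{x})=1-\mathbb{P}_{\pi}(S_{\bm{x}})$, so that certifying $\mathbb{P}_{\pi}(S_{\bm{x}_0})\geq \epsilon_1$ is equivalent to certifying $V(\bm{x}_0)\leq 1-\epsilon_1$.

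For the necessity direction (``only if''), I would simply exhibit $v:=V$ as a witness. Assuming $\mathbb{P}_{\pi}(S_{\bm{x}_0})\geq \epsilon_1$, the first inequality $V(\bm{x}_0)=1-\mathbb{P}_{\pi}(S_{\bm{x}_0})\leq 1-\epsilon_1$ follows from Lemma \ref{safe_equal}. The second and third inequalities follow directly from the Bellman equation \eqref{bellman_safe}: for $\bm{x}\in \mathcal{X}$ it reads $V(\bm{x})=\mathbb{E}_{\bm{\theta}}[V(\bm{f}(\bm{x},\bm{\theta}))]$ (an equality, hence the required $\geq$), and for $\bm{x}\in \mathbb{R}^n\setminus\mathcal{X}$ it reads $V(\bm{x})=1$. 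The fourth inequality $V\geq 0$ follows from the boundedness $V\in[0,1]$ noted after Lemma \ref{safe_equal}.

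For the sufficiency direction (``if''), the crux is to prove that any $v$ satisfying \eqref{constraints1_safe} dominates the value function, i.e. $V(\bm{x})\leq v(\bm{x})$ for all $\bm{x}$; then $V(\bm{x}_0)\leq v(\bm{x}_0)\leq 1-\epsilon_1$ yields the claim via Lemma \ref{safe_equal}. To establish the domination I would introduce the finite-horizon truncations $W_k(\bm{x})$, the probability that the trajectory from $\bm{x}$ exits $\mathcal{X}$ at some step $i\leq k$ while remaining in $\mathcal{X}$ beforehand. These satisfy the value-iteration recursion $W_{k+1}(\bm{x})=1_{\mathbb{R}^n\setminus\mathcal{X}}(\bm{x})+1_{\mathcal{X}}(\bm{x})\mathbb{E}_{\bm{\theta}}[W_k(\bm{f}(\bm{x},\bm{\theta}))]$ with $W_0=1_{\mathbb{R}^n\setminus\mathcal{X}}$, and they increase pointwise to $V$, since $V$ is a sum of nonnegative terms. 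I would then prove $W_k\leq v$ by induction on $k$: the base case uses the third and fourth inequalities ($v\geq 1$ off $\mathcal{X}$ and $v\geq 0$ everywhere), while the inductive step uses the recursion together with the second inequality (the supermartingale-type bound $\mathbb{E}_{\bm{\theta}}[v(\bm{f}(\bm{x},\bm{\theta}))]\leq v(\bm{x})$ on $\mathcal{X}$) and the third inequality off $\mathcal{X}$.

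The main obstacle is this sufficiency half, specifically passing from the per-step comparison to the limit. The argument hinges on identifying the correct truncations $W_k$ and their Bellman recursion, and on justifying $W_k\uparrow V$ by monotone convergence; the non-negativity requirement $v\geq 0$ is exactly what anchors the base case on $\mathcal{X}$, where $W_0=0$. An alternative I would keep in mind is a stopped-supermartingale argument: letting $\tau$ be the first exit time from $\mathcal{X}$, the second inequality makes $v(\bm{\phi}_{\pi}^{\bm{x}_0}(l\wedge\tau))$ a supermartingale, and bounding it below using the third and fourth inequalities gives $v(\bm{x}_0)\geq \mathbb{P}_{\pi}(\tau\leq l)$, which tends to $V(\bm{x}_0)$ as $l\to\infty$. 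Either route reduces to the same monotone-limit step.
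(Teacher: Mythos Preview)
Your proof is correct and follows essentially the same route as the paper: exhibit the value function $V$ as a witness for one direction, and for the other direction show $v\geq V$ by iteratively unrolling the barrier inequalities (your induction on the finite-horizon truncations $W_k$ with $W_k\uparrow V$ is exactly the explicit form of the paper's ``$\geq\cdots\geq$'' chain, with the nonnegativity $v\geq 0$ used to drop the remainder). One caveat: you have the labels swapped --- in the biconditional ``$\exists v$ iff $\mathbb{P}_{\pi}(S_{\bm{x}_0})\geq\epsilon_1$'', the ``only if'' direction is ``$\exists v\Rightarrow$ probability bound'' (your domination argument), and the ``if'' direction is ``probability bound $\Rightarrow\exists v$'' (your $v:=V$ construction), matching the paper's usage.
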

\begin{proof}
1) We first prove the ``only if'' part. 

We first prove  via induction that for all $k\in \mathbb{N}$,
\[
\begin{split}
\zeta_k(\bm{x}) :&= \mathbb{E}_{\pi}\left[ \sum_{i=0}^{k} \prod_{j=0}^{i-1} 1_{\mathcal{X}}(\bm{\phi}^{\bm{x}}_{\pi}(j)) \cdot 1_{\mathbb{R}^{n}\setminus\mathcal{X}}(\bm{\phi}^{\bm{x}}_{\pi}(i)) \right] + \mathbb{E}_{\pi}\left[ \prod_{j=0}^{k} 1_{\mathcal{X}}(\bm{\phi}^{\bm{x}}_{\pi}(j)) \cdot v(\bm{\phi}^{\bm{x}}_{\pi}(k+1)) \right]\leq v(\bm{x}).
\end{split}
\]

\textbf{Base Case ($k=0$):}
\[
\begin{split}
\zeta_0(\bm{x}) &= \mathbb{E}_{\pi}\left[1_{\mathbb{R}^{n}\setminus\mathcal{X}}(\bm{\phi}^{\bm{x}}_{\pi}(0))\right] + \mathbb{E}_{\pi}\left[1_{\mathcal{X}}(\bm{\phi}^{\bm{x}}_{\pi}(0)) v(\bm{\phi}^{\bm{x}}_{\pi}(1))\right] \\
&= 1_{\mathbb{R}^{n}\setminus\mathcal{X}}(\bm{x}) + 1_{\mathcal{X}}(\bm{x}) \mathbb{E}_{\bm{\theta}}[v(\bm{f}(\bm{x},\bm{\theta}))] \leq v(\bm{x}),
\end{split}
\]
where the first equality follows from the convention that the empty product equals 1, and the inequality follows from condition \eqref{constraints1_safe}.

\textbf{Inductive Step:} Assume $v(\bm{x}) \geq \zeta_k(\bm{x})$ for some $k \geq 0$. Then:
\begin{align*}
\zeta_{k+1}(\bm{x}) 
&= \zeta_k(\bm{x}) 
- \mathbb{E}_{\pi} \Bigg[ 
\prod_{j=0}^{k} 1_{\mathcal{X}}\big(\bm{\phi}^{\bm{x}}_{\pi}(j)\big) \,
v\big(\bm{\phi}^{\bm{x}}_{\pi}(k+1)\big) 
\Bigg] \\
&+ \mathbb{E}_{\pi} \Bigg[ 
\prod_{j=0}^{k} 1_{\mathcal{X}}\big(\bm{\phi}^{\bm{x}}_{\pi}(j)\big) \Bigg( 
1_{\mathbb{R}^{n} \setminus \mathcal{X}}\big(\bm{\phi}^{\bm{x}}_{\pi}(k+1)\big) + 1_{\mathcal{X}}\big(\bm{\phi}^{\bm{x}}_{\pi}(k+1)\big) \,
\mathbb{E}_{\bm{\theta}} \bigg[
v\big(\bm{\phi}^{\bm{x}}_{\pi}(k+2)\big)
\bigg] \Bigg) 
\Bigg].
\end{align*}

Using condition \eqref{constraints1_safe} at state $\bm{\phi}^{\bm{x}}_{\pi}(k+1)$:
\[
\begin{split}
v(\bm{\phi}^{\bm{x}}_{\pi}(k+1)) \geq &1_{\mathbb{R}^{n}\setminus\mathcal{X}}(\bm{\phi}^{\bm{x}}_{\pi}(k+1)) + 1_{\mathcal{X}}(\bm{\phi}^{\bm{x}}_{\pi}(k+1)) \mathbb{E}_{\bm{\theta}}[v(\bm{\phi}^{\bm{x}}_{\pi}(k+2))],
\end{split}
\]
we have $\zeta_{k+1}(\bm{x}) \leq \zeta_k(\bm{x}) \leq v(\bm{x})$.

By induction, $v(\bm{x}) \geq \zeta_k(\bm{x})$ for all $k \in \mathbb{N}$.  Since $\zeta_k(\bm{x})\geq 0$ for all $k \in \mathbb{N}$, $\lim_{k\to\infty} \zeta_k(\bm{x})$ exists. Taking $k \to \infty$, we have
\[
\begin{split}
\lim_{k\to\infty} \zeta_k(\bm{x}) &= \mathbb{E}_{\pi}\left[ \sum_{i=0}^{\infty} \prod_{j=0}^{i-1} 1_{\mathcal{X}}(\bm{\phi}^{\bm{x}}_{\pi}(j)) \cdot 1_{\mathbb{R}^{n}\setminus\mathcal{X}}(\bm{\phi}^{\bm{x}}_{\pi}(i)) \right] + \lim_{k\to\infty} \mathbb{E}_{\pi}\left[ \prod_{j=0}^{k} 1_{\mathcal{X}}(\bm{\phi}^{\bm{x}}_{\pi}(j)) v(\bm{\phi}^{\bm{x}}_{\pi}(k+1)) \right]\\
&\geq \mathbb{E}_{\pi}\left[ \sum_{i=0}^{\infty} \prod_{j=0}^{i-1} 1_{\mathcal{X}}(\bm{\phi}^{\bm{x}}_{\pi}(j)) \cdot 1_{\mathbb{R}^{n}\setminus\mathcal{X}}(\bm{\phi}^{\bm{x}}_{\pi}(i)) \right]\\
&=V(\bm{x}).
\end{split}
\]
 Thus, $v(\bm{x}) \geq V(\bm{x})$.

2) We will prove the ``if'' part.

If  $\mathbb{P}_{\pi}(S_{\bm{x}_0})\geq \epsilon_1$, we have $V(\bm{x}_0)\leq 1-\epsilon_1$ from Lemma \ref{safe_equal}, where $V(\cdot):\mathbb{R}^n \rightarrow \mathbb{R}$ is the value function in \eqref{value_safe}. Moreover, according to Proposition \ref{pro_bellman1}, $V(\bm{x})$ satisfies 
\begin{equation*}
\begin{cases}
V(\bm{x}) = \mathbb{E}_{\bm{\theta}}[V(\bm{f}(\bm{x},\bm{\theta}))], &\forall \bm{x}\in \mathcal{X},\\
V(\bm{x})=1, &\forall \bm{x}\in \mathbb{R}^n\setminus \mathcal{X}.
\end{cases}
\end{equation*}
Also, since $V(\bm{x}) \geq 0$ for $\bm{x}\in \mathbb{R}^n$, $V(\bm{x})$ satisfies \eqref{constraints1_safe}. 
\end{proof}

\begin{remark}
\label{remark1}
In this study, we consider the safety verification with respect to a fixed initial state $\bm{x}_0\in \mathcal{X}$. However, if we use an initial set $\mathcal{X}_0$, which is a set of initial states, the barrier-like condition \eqref{constraints1_safe}, with $v(\bm{x})\leq 1-\epsilon_1, \forall \bm{x}\in \mathcal{X}_0$ replacing $v(\bm{x}_0)\leq 1-\epsilon_1$, is also a necessary and sufficient  one for justifying $\mathbb{P}_{\pi}(S_{\bm{x}})\geq \epsilon_1, \forall \bm{x}\in \mathcal{X}_0$, since $\mathbb{P}_{\pi}(S_{\bm{x}})\geq \epsilon_1, \forall \bm{x}\in \mathcal{X}_0$ is equivalent to $V(\bm{x})\leq 1-\epsilon_1, \forall \bm{x}\in \mathcal{X}_0$, where $V(\cdot): \mathbb{R}^n\rightarrow \mathbb{R}$ is the value function \eqref{value_safe}.

In addition, the set $\mathbb{R}^n$ in condition \eqref{constraints1_safe} can be substituted with a set $\Omega$, which encompasses the reachable set of system \eqref{system} starting from the safe set $\mathcal{X}$ within a single step, i.e., 
\begin{equation}
    \label{omega}
    \Omega \supseteq \{\bm{x}_1 \mid \bm{x}_1 = \bm{f}(\bm{x}, \bm{\theta}), \forall \bm{x} \in \mathcal{X}, \bm{\theta} \in \Theta\} \cup \mathcal{X}.
\end{equation} The resulting condition also serves as both a necessary and sufficient  criterion for certifying lower bounds of safety probabilities. It  is the one (9) in Proposition 3 in \cite{yu2023safe}, which was derived using an auxiliary switched system and Ville's Inequality \cite{ville1939etude}. In \cite{yu2023safe}, only the sufficiency of the condition for safety verification was demonstrated. In addition, this condition serves as a typical example of the condition (3) with $\alpha=1$ and $\beta=0$ in Theorem 1 of \cite{xue2024finite}, which investigates finite-time safety verification.  It is important to note that while Proposition 2 in \cite{santoyo2021barrier} also establishes a sufficient barrier-like condition for certifying upper bounds on the safety probability of avoiding unsafe sets when $\tilde{\alpha}=1$ and $\tilde{\beta}=0$, the safety probability pertains to a stopped process that stops evolving upon exiting the set $\mathcal{X}$. For interested readers, please refer to Proposition 2 in \cite{santoyo2021barrier}. However, as discussed in Section \ref{intro} and in Remark \ref{safety_reach}, which is introduced later, the safety probability should be interpreted as the reach-avoid probability defined in Definition \ref{reach-avoid}.
$\blacksquare$
\end{remark}

\section{Reach-avoid Verification}
\label{sec:lb}
This section presents necessary and sufficient  barrier-like conditions for the reach-avoid verification in Definition \ref{reach-avoid}. Two cases are discussed in this section. The first case assumes that, for every state in $\mathcal{X}\setminus \mathcal{X}_r$, the system \eqref{system} will either leave the safe set $\mathcal{X}$ or enter the target set $\mathcal{X}_r$ in finite time almost surely. The second case considers the assumption that the specified lower bound $\epsilon_2$ is strictly less than the exact reach-avoid probability $\mathbb{P}_{\pi}(RA_{\bm{x}_0})$, i.e., $\epsilon_2<\mathbb{P}_{\pi}(RA_{\bm{x}_0})$. These two cases are detailed in Subsection \ref{rv_1} and \ref{rv_2}, respectively.

\subsection{Reach-avoid Verification I}
\label{rv_1}
This subsection formulates a necessary and sufficient barrier-like condition for reach-avoid verification, under the assumption that, for every state in $\mathcal{X} \setminus \mathcal{X}_r$, the system \eqref{system} will almost surely either enter the target set $\mathcal{X}_r$ or exit the safe set $\mathcal{X}$ in finite time. Similar to the one in Section \ref{sec:rav}, this condition is also constructed by relaxing a Bellman equation.  The Bellman equation is derived from a value function.

We begin by introducing the value function $V(\cdot):\mathbb{R}^n \rightarrow \mathbb{R}$, which characterizes the exact reach-avoid probability $\mathbb{P}_{\pi}(RA_{\bm{x}})$ for $\bm{x}\in \mathbb{R}^n$. We define the value function as follows:
\begin{equation}
    \label{value}
    \begin{split}
    V(\bm{x}):=&\mathbb{E}_{\pi}\big[ g(\bm{x})\big],
    \end{split}
\end{equation}
where \[g(\bm{x})=1_{\mathcal{X}_r}(\bm{\phi}_{\pi}^{\bm{x}}(0))+\sum_{i\in \mathbb{N}_{\geq 1}}\prod_{j=0}^{i-1} 1_{\mathcal{X}\setminus \mathcal{X}_r}(\bm{\phi}_{\pi}^{\bm{x}}(j)) 1_{\mathcal{X}_r}(\bm{\phi}_{\pi}^{\bm{x}}(i)).\]

\begin{lemma}
\label{equal}
The value function $V(\bm{x})$ in \eqref{value} is equal to the reach-avoid probability $\mathbb{P}_{\pi}(RA_{\bm{x}})$, i.e., $V(\bm{x})=\mathbb{P}_{\pi}(RA_{\bm{x}})$
for $\bm{x}\in \mathbb{R}^n$.
\end{lemma}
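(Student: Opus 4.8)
The plan is to show that, for each fixed $\bm{x}\in\mathbb{R}^n$, the random variable $g(\bm{x})$ (regarded as a function of the disturbance signal $\pi$) coincides pointwise with the indicator $1_{RA_{\bm{x}}}(\pi)$ of the reach-avoid event; the claim then follows at once by taking expectations, since $\mathbb{E}_{\pi}[1_{RA_{\bm{x}}}]=\mathbb{P}_{\pi}(RA_{\bm{x}})$. The heart of the argument is that the summands in $g(\bm{x})$ index the mutually exclusive events ``the first time the trajectory $\bm{\phi}_{\pi}^{\bm{x}}$ enters $\mathcal{X}_r$ is exactly $i$, and the trajectory has remained in $\mathcal{X}$ up to that time,'' which I would make precise via the first hitting time of $\mathcal{X}_r$.

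First I would fix a signal $\pi$ and suppose $\pi\in RA_{\bm{x}}$. Let $k^{\ast}:=\min\{k\in\mathbb{N}\mid \bm{\phi}_{\pi}^{\bm{x}}(k)\in\mathcal{X}_r\}$, which is well defined because the reach-avoid condition supplies at least one such $k$. By minimality, $\bm{\phi}_{\pi}^{\bm{x}}(i)\notin\mathcal{X}_r$ for all $i<k^{\ast}$; and since $k^{\ast}$ is at most any index $k$ witnessing membership in $RA_{\bm{x}}$, the safety constraint gives $\bm{\phi}_{\pi}^{\bm{x}}(i)\in\mathcal{X}$ for all $i\le k^{\ast}$, hence $\bm{\phi}_{\pi}^{\bm{x}}(i)\in\mathcal{X}\setminus\mathcal{X}_r$ for $i<k^{\ast}$. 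Substituting into $g(\bm{x})$, the summand indexed by $i=k^{\ast}$ (or the leading term if $k^{\ast}=0$) equals $1$, while every other summand vanishes: for $i<k^{\ast}$ the factor $1_{\mathcal{X}_r}(\bm{\phi}_{\pi}^{\bm{x}}(i))$ is $0$, and for $i>k^{\ast}$ the product $\prod_{j=0}^{i-1}1_{\mathcal{X}\setminus\mathcal{X}_r}(\bm{\phi}_{\pi}^{\bm{x}}(j))$ is $0$ because it contains the factor $1_{\mathcal{X}\setminus\mathcal{X}_r}(\bm{\phi}_{\pi}^{\bm{x}}(k^{\ast}))=0$. Therefore $g(\bm{x})=1$.

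Next I would treat the complementary case $\pi\notin RA_{\bm{x}}$ and show every summand is $0$. If the leading term were nonzero, then $\bm{\phi}_{\pi}^{\bm{x}}(0)\in\mathcal{X}_r\subseteq\mathcal{X}$, which witnesses $RA_{\bm{x}}$ with $k=0$, a contradiction; here I use the standing hypothesis $\mathcal{X}_r\subseteq\mathcal{X}$. Likewise, if a summand with $i\ge 1$ were nonzero, then $\bm{\phi}_{\pi}^{\bm{x}}(j)\in\mathcal{X}\setminus\mathcal{X}_r\subseteq\mathcal{X}$ for $j<i$ and $\bm{\phi}_{\pi}^{\bm{x}}(i)\in\mathcal{X}_r\subseteq\mathcal{X}$, which again witnesses $RA_{\bm{x}}$ with $k=i$, a contradiction. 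Hence $g(\bm{x})=0$, and combining the two cases gives $g(\bm{x})=1_{RA_{\bm{x}}}(\pi)$ for every $\pi$. Taking $\mathbb{E}_{\pi}[\cdot]$ of both sides yields $V(\bm{x})=\mathbb{P}_{\pi}(RA_{\bm{x}})$.

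I expect the main subtlety to lie in the first case: one must pass from an arbitrary witnessing index in the definition of $RA_{\bm{x}}$ to the first hitting time $k^{\ast}$, verify that the intermediate states genuinely lie in $\mathcal{X}\setminus\mathcal{X}_r$ rather than merely in $\mathcal{X}$, and confirm that all summands other than the one at $k^{\ast}$ truly vanish, i.e.\ that the indexed events form a disjoint partition of $RA_{\bm{x}}$. The inclusion $\mathcal{X}_r\subseteq\mathcal{X}$ is invoked repeatedly and is precisely what makes the two descriptions of the event agree. Everything else is routine and parallels Lemma \ref{safe_equal}: $g(\bm{x})\in[0,1]$ pointwise, so the expectation is well defined and, should one prefer the termwise route, equals the convergent sum of first-hitting-time probabilities that add up to $\mathbb{P}_{\pi}(RA_{\bm{x}})$.
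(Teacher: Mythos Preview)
Your proof is correct and follows essentially the same idea as the paper's: both recognize that the summands of $g(\bm{x})$ correspond to the disjoint events ``first hit of $\mathcal{X}_r$ occurs at time $i$ while remaining in $\mathcal{X}\setminus\mathcal{X}_r$ before,'' whose union is exactly $RA_{\bm{x}}$. The only cosmetic difference is that you argue pointwise ($g(\bm{x})=1_{RA_{\bm{x}}}(\pi)$ for every $\pi$, then take $\mathbb{E}_{\pi}$), whereas the paper takes expectations term by term and sums the resulting first-hitting probabilities; your version is in fact more explicit about the disjointness and the role of the inclusion $\mathcal{X}_r\subseteq\mathcal{X}$.
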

\begin{proof}  
By definition, $\mathbb{E}_{\pi}[1_{\mathcal{X}_r}(\bm{\phi}_{\pi}^{\bm{x}}(0))]=1_{\mathcal{X}_r}(\bm{x})$. In addition, since 
$\mathbb{E}_{\pi}[\prod_{j=0}^{i-1} 1_{\mathcal{X}\setminus \mathcal{X}_r}(\bm{\phi}_{\pi}^{\bm{x}}(j)) 1_{\mathcal{X}_r}(\bm{\phi}_{\pi}^{\bm{x}}(i))]=\mathbb{P}_{\pi}(\wedge_{j=0}^{i-1}[\bm{\phi}_{\pi}^{\bm{x}}(j)\in \mathcal{X}\setminus \mathcal{X}_r] \wedge [\bm{\phi}_{\pi}^{\bm{x}}(i)\in \mathcal{X}_r])$ is the probability that the system \eqref{system} starting from $\bm{x}$ will enter the set $\mathcal{X}_r$ at time $t=i$ while staying within $\mathcal{X}\setminus \mathcal{X}_r$ before $i$, where $i\geq 1$. Thus, we have 
\begin{equation*}
\begin{split}
&\mathbb{E}_{\pi}[1_{\mathcal{X}_r}(\bm{\phi}_{\pi}^{\bm{x}}(0))]+\sum_{i\in \mathbb{N}_{\geq 1}}\mathbb{E}_{\pi}[\prod_{j=0}^{i-1} 1_{\mathcal{X}\setminus \mathcal{X}_r}(\bm{\phi}_{\pi}^{\bm{x}}(j)) 1_{\mathbb{R}^n\setminus \mathcal{X}_r}(\bm{\phi}_{\pi}^{\bm{x}}(i))]\\
&=\mathbb{P}_{\pi}(\bm{\phi}_{\pi}^{\bm{x}}(0)\in \mathcal{X}_r)+\sum_{i\in \mathbb{N}_{\geq 1}} \mathbb{P}_{\pi}(\wedge_{j=0}^{i-1}[\bm{\phi}_{\pi}^{\bm{x}}(j)\in \mathcal{X}\setminus \mathcal{X}_r] \wedge [\bm{\phi}_{\pi}^{\bm{x}}(i)\in \mathcal{X}_r])\\
&=\mathbb{P}_{\pi}(RA_{\bm{x}}).
\end{split}
\end{equation*}
Consequently, $\mathbb{P}_{\pi}(RA_{\bm{x}})=V(\bm{x}).$
\end{proof}

We next will show that the value function \eqref{value} can be reduced to a solution to a Bellman equation via the dynamic programming principle. A controlled version of the Bellman equation can be found in \cite{summers2010verification}.
\begin{proposition}
\label{pro_bellman}
    The value function $V(\cdot):\mathbb{R}^n \rightarrow \mathbb{R}$ in \eqref{value}
 satisfies the following Bellman equation
 \begin{equation}
 \label{bellman}
     V(\bm{x})=1_{\mathcal{X}_r}(\bm{x})+ 1_{\mathcal{X}\setminus \mathcal{X}_r}(\bm{x}) \mathbb{E}_{\bm{\theta}}[V(\bm{f}(\bm{x},\bm{\theta}))]
 \end{equation}
 for $\bm{x}\in \mathbb{R}^n$.
 \end{proposition}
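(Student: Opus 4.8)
The plan is to mirror the proof of Proposition \ref{pro_bellman1}, since the reach-avoid value function \eqref{value} has exactly the same first-hitting-time structure as the safety value function, with $\mathcal{X}_r$ playing the role of the terminal/absorbing set and $\mathcal{X}\setminus\mathcal{X}_r$ the ``continue'' region. First I would observe that $\bm{\phi}_\pi^{\bm{x}}(0)=\bm{x}$, so $1_{\mathcal{X}_r}(\bm{\phi}_\pi^{\bm{x}}(0))=1_{\mathcal{X}_r}(\bm{x})$, and that for every $i\in\mathbb{N}_{\geq 1}$ the $i$-th summand in $g(\bm{x})$ carries the factor $1_{\mathcal{X}\setminus\mathcal{X}_r}(\bm{\phi}_\pi^{\bm{x}}(0))=1_{\mathcal{X}\setminus\mathcal{X}_r}(\bm{x})$. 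Pulling this common factor out of the sum isolates the first step and rewrites the pathwise integrand as $g(\bm{x})=1_{\mathcal{X}_r}(\bm{x})+1_{\mathcal{X}\setminus\mathcal{X}_r}(\bm{x})\,\tilde g(\bm{y})$, where $\bm{y}=\bm{\phi}_\pi^{\bm{x}}(1)=\bm{f}(\bm{x},\bm{\theta}(0))$ and $\tilde g(\bm{y})$ is the same functional from \eqref{value} evaluated along the time-shifted trajectory started at $\bm{y}$; in particular the $i=1$ term becomes the $1_{\mathcal{X}_r}(\bm{y})$ leading term of $\tilde g(\bm{y})$.

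Next I would take $\mathbb{E}_\pi[\cdot]$ of this identity. Because $V\in[0,1]$ by Lemma \ref{equal} (it equals a probability), all summands are nonnegative and the partial sums are uniformly bounded by $1$, so monotone/dominated convergence legitimises exchanging $\mathbb{E}_\pi$ with the infinite sum and lets me split the expectation across the two terms. The key step is then to evaluate $\mathbb{E}_\pi[\tilde g(\bm{y})]$: since the shifted trajectory depends only on $\bm{\theta}(1),\bm{\theta}(2),\dots$, and these are i.i.d. and independent of $\bm{\theta}(0)$, which alone determines $\bm{y}$, conditioning on $\bm{\theta}(0)$ via the tower property gives $\mathbb{E}_\pi[\tilde g(\bm{y})]=\mathbb{E}_{\bm{\theta}}\big[\mathbb{E}_\pi[\tilde g(\bm{y})\mid\bm{\theta}(0)]\big]=\mathbb{E}_{\bm{\theta}}[V(\bm{f}(\bm{x},\bm{\theta}))]$, where the inner conditional expectation is exactly $V(\bm{y})$ by the definition \eqref{value} together with the stationarity of the disturbance law. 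Substituting back yields the claimed Bellman equation \eqref{bellman}.

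I expect the main obstacle to be the careful bookkeeping in this shift/Markov argument, that is, making precise that $\tilde g(\bm{y})$ is the value functional driven by the shifted i.i.d. signal and that its conditional expectation given $\bm{\theta}(0)$ coincides with $V(\bm{y})$; this is where the i.i.d. assumption on $\{\bm{\theta}(i)\}$ is essential, and it is the discrete-time analogue of the Markov/dynamic programming principle. It is also worth double-checking the two degenerate cases so that the decomposition is valid on all of $\mathbb{R}^n$: when $\bm{x}\in\mathcal{X}_r$ the factor $1_{\mathcal{X}\setminus\mathcal{X}_r}(\bm{x})$ annihilates the sum and both sides reduce to $1$, and when $\bm{x}\in\mathbb{R}^n\setminus\mathcal{X}$ we have $1_{\mathcal{X}_r}(\bm{x})=1_{\mathcal{X}\setminus\mathcal{X}_r}(\bm{x})=0$ (using $\mathcal{X}_r\subseteq\mathcal{X}$) so both sides reduce to $0$. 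These confirm that the first-step identity, and hence \eqref{bellman}, holds for every $\bm{x}\in\mathbb{R}^n$ and not merely on the continue region $\mathcal{X}\setminus\mathcal{X}_r$.
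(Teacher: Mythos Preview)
Your proposal is correct and follows essentially the same approach as the paper: a first-step decomposition of the pathwise integrand $g(\bm{x})$ into $1_{\mathcal{X}_r}(\bm{x})+1_{\mathcal{X}\setminus\mathcal{X}_r}(\bm{x})\,\tilde g(\bm{y})$ with $\bm{y}=\bm{f}(\bm{x},\bm{\theta}(0))$, followed by taking $\mathbb{E}_\pi$ and using the i.i.d.\ structure of the disturbances to identify the inner expectation with $V(\bm{y})$. If anything, you are more explicit than the paper about the tower property, the integrability justification, and the boundary cases $\bm{x}\in\mathcal{X}_r$ and $\bm{x}\in\mathbb{R}^n\setminus\mathcal{X}$, which the paper leaves implicit.
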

 \begin{proof}
  Since $g(\bm{x})=1_{\mathcal{X}_r}(\bm{x})+1_{\mathcal{X}\setminus \mathcal{X}_r}(\bm{x})(1_{\mathcal{X}_r}(\bm{y})+\sum_{i\in \mathbb{N}_{\geq 1}}\prod_{j=0}^{i-1} 1_{\mathcal{X}}(\bm{\phi}_{\pi}^{\bm{y}}(j)) 1_{\mathcal{X}_r}(\bm{\phi}_{\pi}^{\bm{y}}(i)))$, we have 
 \begin{equation*}
     \begin{split}
         V(\bm{x})=&1_{\mathcal{X}_r}(\bm{x})+1_{\mathcal{X}\setminus \mathcal{X}_r}(\bm{x})\mathbb{E}_{\pi}[1_{\mathcal{X}_r}(\bm{y})+\sum_{i\in \mathbb{N}_{\geq 1}}\prod_{j=0}^{i-1} 1_{\mathcal{X}\setminus \mathcal{X}_r}(\bm{\phi}_{\pi}^{\bm{y}}(j)) 1_{\mathcal{X}_r}(\bm{\phi}_{\pi}^{\bm{y}}(i))]\\
         =&1_{\mathcal{X}_r}(\bm{x})+1_{\mathcal{X}\setminus \mathcal{X}_r}(\bm{x})\mathbb{E}_{\bm{\theta}}[1_{\mathcal{X}_r}(\bm{y})+\mathbb{E}_{\pi}\left[\sum_{i\in \mathbb{N}_{\geq 1}}\prod_{j=0}^{i-1} 1_{\mathcal{X}\setminus \mathcal{X}_r}(\bm{\phi}_{\pi}^{\bm{y}}(j)) 1_{\mathcal{X}_r}(\bm{\phi}_{\pi}^{\bm{y}}(i))]\right]\\
         =&1_{\mathcal{X}_r}(\bm{x})+1_{\mathcal{X}\setminus \mathcal{X}_r}(\bm{x})\mathbb{E}_{\bm{\theta}}[V(\bm{f}(\bm{x},\bm{\theta}))]\\
     \end{split}
 \end{equation*}
 where $\bm{y}=\bm{\phi}_{\pi}^{\bm{x}}(1)=\bm{f}(\bm{x},\bm{\theta})$.
 \end{proof}

\begin{remark}
\label{safety_reach}
Similar to the condition \eqref{constraints1_safe} in Theorem \ref{upper_condition},  we can also construct a necessary and sufficient  condition for the safety verification scenario in \cite{prajna2007framework}, which is certifying upper bounds of the probability that the system eventually enters unsafe sets
from an initial state while adhering to state-constrained sets, by relaxing the Bellman equation \eqref{bellman}. It is shown in Proposition \ref{upper_condition_ra}. The proof is shown in Appendix. In this proposition, $\mathcal{X}_r$ is a set of unsafe states and $\mathcal{X}$ is a state-constrained set. This condition is also a typical instance of condition (9) with $\alpha=1$ and $\beta=0$ in Theorem 3 in \cite{xue2024finite}, which provides upper bounds of the reach-avoid probability in the finite-time reach-avoid verification.  
\begin{proposition}
\label{upper_condition_ra}
 Let $\epsilon'_1\in [0,1]$. There exists a function $v(\bm{x}): \mathbb{R}^n \rightarrow \mathbb{R}$ satisfying the barrier-like condition:
 \begin{equation}
 \label{constraints1}
     \begin{cases}
         v(\bm{x}_0)\leq \epsilon'_1, \\
         v(\bm{x}) \geq \mathbb{E}_{\bm{\theta}}[v(\bm{f}(\bm{x},\bm{\theta}))], & \forall \bm{x}\in \mathcal{X}\setminus \mathcal{X}_r,\\
         v(\bm{x})\geq 1, &\forall \bm{x}\in \mathcal{X}_r, \\
         v(\bm{x}) \geq 0, & \forall \bm{x}\in \mathbb{R}^n\setminus \mathcal{X},
     \end{cases}
 \end{equation}
 if and only if 
 $\mathbb{P}_{\pi}(S'_{\bm{x}_0})\leq \epsilon'_1$, where $S'_{\bm{x}_0}=RA_{\bm{x}_0}=\{\pi \mid \exists k\in \mathbb{N}. \bm{\phi}_{\pi}^{\bm{x}_0}(k)\in \mathcal{X}_r \wedge \forall i\in \mathbb{N}_{\leq k}.  \bm{\phi}_{\pi}^{\bm{x}_0}(i)\in \mathcal{X}\}$.
\end{proposition}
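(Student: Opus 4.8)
The plan is to reuse the template of the proof of Theorem \ref{upper_condition}, now driven by the reach-avoid value function \eqref{value}, its Bellman equation \eqref{bellman}, and Lemma \ref{equal} in place of their safety analogues. The enabling observation is that the four inequalities in \eqref{constraints1} collapse into a single relaxed Bellman inequality valid on all of $\mathbb{R}^n$: checking the three regions $\mathcal{X}_r$, $\mathcal{X}\setminus\mathcal{X}_r$, and $\mathbb{R}^n\setminus\mathcal{X}$ (recall $\mathcal{X}_r\subseteq\mathcal{X}$), the constraints $v\geq 1$ on $\mathcal{X}_r$, $v(\bm{x})\geq\mathbb{E}_{\bm{\theta}}[v(\bm{f}(\bm{x},\bm{\theta}))]$ on $\mathcal{X}\setminus\mathcal{X}_r$, and $v\geq 0$ on $\mathbb{R}^n\setminus\mathcal{X}$ jointly give
\[v(\bm{x})\geq 1_{\mathcal{X}_r}(\bm{x})+1_{\mathcal{X}\setminus\mathcal{X}_r}(\bm{x})\,\mathbb{E}_{\bm{\theta}}[v(\bm{f}(\bm{x},\bm{\theta}))],\quad \forall\bm{x}\in\mathbb{R}^n,\]
which is exactly the relaxation of \eqref{bellman}.

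For the ``if'' direction I would simply take $v:=V$, the value function \eqref{value}. Proposition \ref{pro_bellman} shows $V$ satisfies \eqref{bellman}, which immediately delivers the two middle constraints, namely $V(\bm{x})=\mathbb{E}_{\bm{\theta}}[V(\bm{f}(\bm{x},\bm{\theta}))]$ on $\mathcal{X}\setminus\mathcal{X}_r$ and $V(\bm{x})=1$ on $\mathcal{X}_r$. Since $\mathcal{X}_r\subseteq\mathcal{X}$, both indicators in \eqref{bellman} vanish on $\mathbb{R}^n\setminus\mathcal{X}$, so $V\equiv 0$ there and in particular $V\geq 0$; and Lemma \ref{equal} gives $V(\bm{x}_0)=\mathbb{P}_{\pi}(RA_{\bm{x}_0})=\mathbb{P}_{\pi}(S'_{\bm{x}_0})\leq\epsilon'_1$. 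Thus $V$ meets all of \eqref{constraints1}. This direction needs neither an iteration nor the standing finite-time assumption of Subsection \ref{rv_1}.

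For the ``only if'' direction I would iterate the relaxed Bellman inequality along trajectories, as in Theorem \ref{upper_condition}. After $k$ unfoldings,
\[v(\bm{x})\geq \mathbb{E}_{\pi}\Big[\sum_{i=0}^{k-1}\prod_{j=0}^{i-1}1_{\mathcal{X}\setminus\mathcal{X}_r}(\bm{\phi}_{\pi}^{\bm{x}}(j))\,1_{\mathcal{X}_r}(\bm{\phi}_{\pi}^{\bm{x}}(i))\Big]+R_k(\bm{x}),\]
with remainder $R_k(\bm{x})=\mathbb{E}_{\pi}\big[\prod_{j=0}^{k-1}1_{\mathcal{X}\setminus\mathcal{X}_r}(\bm{\phi}_{\pi}^{\bm{x}}(j))\,v(\bm{\phi}_{\pi}^{\bm{x}}(k))\big]$. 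By monotone convergence the leading term increases to $\mathbb{E}_{\pi}[g(\bm{x})]=V(\bm{x})$ as $k\to\infty$, so it suffices to prove $\liminf_{k}R_k(\bm{x})\geq 0$; then $v\geq V$ pointwise, and evaluating at $\bm{x}_0$ with $v(\bm{x}_0)\leq\epsilon'_1$ and Lemma \ref{equal} gives $\mathbb{P}_{\pi}(S'_{\bm{x}_0})=V(\bm{x}_0)\leq\epsilon'_1$.

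The remainder $R_k$ is where the argument genuinely departs from Theorem \ref{upper_condition} and where I expect the main difficulty. In the safety theorem the fourth constraint forced $v\geq 0$ everywhere, so the analogous remainder was nonnegative termwise and could be discarded with no further hypothesis; here $v\geq 0$ is imposed only on $\mathbb{R}^n\setminus\mathcal{X}$, so on the event that the trajectory is still inside $\mathcal{X}\setminus\mathcal{X}_r$ at step $k$ the sign of $v(\bm{\phi}_{\pi}^{\bm{x}}(k))$ is uncontrolled. I would therefore split $R_k(\bm{x})=\mathbb{E}_{\pi}[1_{\{\tau=k\}}v(\bm{\phi}_{\pi}^{\bm{x}}(k))]+\mathbb{E}_{\pi}[1_{\{\tau>k\}}v(\bm{\phi}_{\pi}^{\bm{x}}(k))]$ using the first exit time $\tau=\inf\{k:\bm{\phi}_{\pi}^{\bm{x}}(k)\notin\mathcal{X}\setminus\mathcal{X}_r\}$: on $\{\tau=k\}$ the state $\bm{\phi}_{\pi}^{\bm{x}}(k)$ lies in $\mathcal{X}_r\cup(\mathbb{R}^n\setminus\mathcal{X})$ where $v\geq 0$, so the first term is nonnegative, leaving $\mathbb{E}_{\pi}[1_{\{\tau>k\}}v(\bm{\phi}_{\pi}^{\bm{x}}(k))]$ to be handled. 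This is exactly where the standing assumption of Subsection \ref{rv_1}---that the system almost surely reaches $\mathcal{X}_r$ or leaves $\mathcal{X}$ in finite time---is indispensable: it forces $\mathbb{P}_{\pi}(\tau>k)\to 0$, so that, within the bounded-function framework in which \eqref{bellman} has a unique bounded solution, dominated convergence yields $R_k\to 0$. Making this vanishing rigorous, and pinning down the integrability of $v$ under which it holds, is the delicate step; the remaining unfolding is routine and parallels Theorem \ref{upper_condition}.
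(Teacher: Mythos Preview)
Your plan mirrors the paper closely. The ``if'' direction is identical: the paper also takes $v=V$, invokes Proposition~\ref{pro_bellman} to read the four constraints of \eqref{constraints1} off the Bellman equation \eqref{bellman}, and uses Lemma~\ref{equal} for $V(\bm{x}_0)\leq\epsilon'_1$. For the ``only if'' direction, the paper likewise unfolds the relaxed Bellman inequality $v(\bm{x})\geq 1_{\mathcal{X}_r}(\bm{x})+1_{\mathcal{X}\setminus\mathcal{X}_r}(\bm{x})\,\mathbb{E}_{\bm{\theta}}[v(\bm{f}(\bm{x},\bm{\theta}))]$ along trajectories, exactly as in Theorem~\ref{upper_condition}, and concludes $v(\bm{x})\geq V(\bm{x})$.

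The one substantive divergence is your handling of the remainder $R_k$. You claim that Assumption~\ref{assump1} (together with boundedness of $v$) is ``indispensable'' to force $R_k\to 0$. The paper does \emph{not} invoke Assumption~\ref{assump1} here: Proposition~\ref{upper_condition_ra} is stated inside Remark~\ref{safety_reach}, before Assumption~\ref{assump1} is introduced, and the Appendix proof simply writes the chain of inequalities and passes to the infinite sum $V(\bm{x})$ without any additional hypothesis, treating the remainder exactly as in Theorem~\ref{upper_condition}. Your observation that, unlike \eqref{constraints1_safe}, condition \eqref{constraints1} does not impose $v\geq 0$ on $\mathcal{X}\setminus\mathcal{X}_r$---so the remainder is not termwise nonnegative---is a careful one, and your stopping-time split is a reasonable way to isolate it. But the paper's own argument does not engage with this point and adds no assumption; so while your overall structure matches, you are importing a hypothesis the paper neither states nor uses for this proposition.
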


As discussed in Remark \ref{remark1}, we can also revise condition \eqref{constraints1} to establish a necessary and sufficient criterion for ensuring that 
 $\mathbb{P}_{\pi}(S'_{\bm{x}})\leq \epsilon'_1, \forall \bm{x}\in \mathcal{X}_0$, where $\mathcal{X}_0$ is a set of initial states.

In addition, as discussed in Remark \ref{remark1}, a sufficient barrier-like condition is formulated in Proposition 2 with parameters $\tilde{\alpha}=1$ and $\tilde{\beta}=0$ in \cite{santoyo2021barrier}. This condition can also be used for certifying upper bounds for the probability $\mathbb{P}_{\pi}(S'_{\bm{x}_0})$. The primary distinction between this condition and the one presented in \eqref{constraints1} is that the barrier function $ B(\bm{x})$ in Proposition 2 in \cite{santoyo2021barrier} does not require the condition $ B(\bm{x}) \geq 0 $ for $ \bm{x} \in \mathbb{R}^n \setminus \mathcal{X} $.
$\blacksquare$
\end{remark}

However, it is generally not feasible to formulate necessary and sufficient  conditions for certifying lower bounds in the reach-avoid verification by relaxing the Bellman equation \eqref{bellman}. The underlying reason is that the bounded solutions to the Bellman equation \eqref{bellman} are typically non-unique. Nevertheless, under certain assumptions, we can ensure uniqueness of these solutions, thereby enabling the derivation of such conditions. 
\begin{assumption}
\label{assump1}
For every initial state $\bm{x}\in \mathcal{X}\setminus \mathcal{X}_r$, the system \eqref{system} exits the set $\mathcal{X}\setminus \mathcal{X}_r$ in finite time almost surely; that is, 
$\mathbb{P}_{\pi}(\forall k\in \mathbb{N}. \bm{\phi}_{\pi}^{\bm{x}}(k)\in \mathcal{X}\setminus \mathcal{X}_r)=0, \forall \bm{x}\in \mathcal{X}\setminus \mathcal{X}_r$.
\end{assumption}

There are systems satisfying Assumption \ref{assump1}. For instance,  consider the stochastic system $\bm{x}(k+1)=\bm{g}(\bm{x}(k))+\bm{\theta}(k)$, where $\bm{\theta}(k)$ is a i.i.d. Gaussian disturbance. Let the initial state $\bm{x}(0)$ lie within a bounded set $\mathcal{X}\setminus \mathcal{X}_r$. Since the additive noise has unbounded support, there is a non-zero probability that the trajectory will eventually exit any bounded set. In fact, with probability one, the trajectory will leave $\mathcal{X}\setminus \mathcal{X}_r$ in finite time. Therefore, the condition $\mathbb{P}_{\pi}(\forall k\in \mathbb{N}. \bm{\phi}_{\pi}^{\bm{x}}(k)\in \mathcal{X}\setminus \mathcal{X}_r)=0, \forall \bm{x}\in \mathcal{X}\setminus \mathcal{X}_r$ holds,
satisfying Assumption \ref{assump1}.

\begin{proposition}
\label{uni_solution}
Under Assumption \ref{assump1}, the Bellman equation \eqref{bellman} has a unique bounded solution over $\mathbb{R}^n$, which is the value function \eqref{value}.
\end{proposition}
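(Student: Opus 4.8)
The plan is to prove existence and uniqueness separately, with the bulk of the work in uniqueness. Existence is immediate from what is already available: by Lemma \ref{equal} the value function \eqref{value} equals $\mathbb{P}_{\pi}(RA_{\bm{x}})$, so it takes values in $[0,1]$ and is in particular bounded, and by Proposition \ref{pro_bellman} it solves the Bellman equation \eqref{bellman}. Thus it suffices to show that no other bounded solution exists.

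For uniqueness, I would let $W:\mathbb{R}^n\to\mathbb{R}$ be an arbitrary bounded solution of \eqref{bellman} and set $D:=W-V$, where $V$ is the value function \eqref{value}. Evaluating \eqref{bellman} region by region pins down every solution on two of the three regions: on $\mathcal{X}_r$ the equation forces the value $1$, and on $\mathbb{R}^n\setminus\mathcal{X}$ (recall $\mathcal{X}_r\subseteq\mathcal{X}$, so both indicators vanish) it forces the value $0$. Hence $D$ vanishes on $\mathcal{X}_r\cup(\mathbb{R}^n\setminus\mathcal{X})$, while on the remaining region it satisfies the homogeneous recursion $D(\bm{x})=\mathbb{E}_{\bm{\theta}}[D(\bm{f}(\bm{x},\bm{\theta}))]$ for $\bm{x}\in\mathcal{X}\setminus\mathcal{X}_r$. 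Since $D$ vanishes off $\mathcal{X}\setminus\mathcal{X}_r$, this is uniformly expressible as $D(\bm{x})=1_{\mathcal{X}\setminus\mathcal{X}_r}(\bm{x})\,\mathbb{E}_{\pi}[D(\bm{\phi}_{\pi}^{\bm{x}}(1))]$ for all $\bm{x}\in\mathbb{R}^n$.

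The key step is to iterate this identity. Using that the disturbances are i.i.d.\ (so the trajectory process is Markov) together with the tower property, I would unfold the recursion $k$ times to obtain, for every $\bm{x}$ and every $k\in\mathbb{N}_{\geq 1}$,
\[
D(\bm{x})=\mathbb{E}_{\pi}\Big[\prod_{j=0}^{k-1}1_{\mathcal{X}\setminus\mathcal{X}_r}(\bm{\phi}_{\pi}^{\bm{x}}(j))\,D(\bm{\phi}_{\pi}^{\bm{x}}(k))\Big].
\]
Because $D$ is bounded, writing $M:=\sup_{\bm{x}}\abs{D(\bm{x})}<\infty$, the product of indicators bounds the integrand and yields $\abs{D(\bm{x})}\leq M\,\mathbb{P}_{\pi}\big(\bm{\phi}_{\pi}^{\bm{x}}(j)\in\mathcal{X}\setminus\mathcal{X}_r,\ \forall j\in\mathbb{N}_{\leq k-1}\big)$.

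Finally I would let $k\to\infty$. The events $\{\bm{\phi}_{\pi}^{\bm{x}}(j)\in\mathcal{X}\setminus\mathcal{X}_r,\ \forall j\leq k-1\}$ decrease to $\{\bm{\phi}_{\pi}^{\bm{x}}(j)\in\mathcal{X}\setminus\mathcal{X}_r,\ \forall j\in\mathbb{N}\}$, whose probability is $0$ by Assumption \ref{assump1} when $\bm{x}\in\mathcal{X}\setminus\mathcal{X}_r$ (for the other $\bm{x}$ we already have $D(\bm{x})=0$). By continuity of the probability measure from above, the right-hand side tends to $0$, forcing $D\equiv 0$ and hence $W=V$. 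I expect the main obstacle to be the rigorous justification of the $k$-fold unfolding, namely making the Markov/tower-property bookkeeping precise so that the nested conditional expectations collapse into the single product-of-indicators expression; relatedly, it is worth flagging that boundedness of $W$ is genuinely essential here, since the remark following Proposition \ref{pro_bellman1} already shows that dropping boundedness (or omitting Assumption \ref{assump1}) destroys uniqueness.
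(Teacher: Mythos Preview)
Your proof is correct and follows essentially the same approach as the paper: both unfold the Bellman recursion and show that the residual term vanishes by combining boundedness with Assumption~\ref{assump1}. Your device of working with the difference $D=W-V$, which satisfies a homogeneous one-step recursion, is a minor but tidy simplification over the paper's direct iteration of $v$, where the inhomogeneous terms must be accumulated separately into $V$.
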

\begin{proof}  
As shown in Proposition \ref{pro_bellman}, the value function \eqref{value} satisfies the Bellman equation \eqref{bellman}. 

In the following, we just show that if a bounded function $v(\bm{x}): \mathbb{R}^n \rightarrow \mathbb{R}$ satisfies the Bellman equation \eqref{bellman}, $v(\bm{x})=V(\bm{x})$ holds for $\bm{x}\in \mathbb{R}^n$. 

We first show that $
v(\bm{x}) = V(\bm{x}) + 1_{\mathcal{X}\setminus \mathcal{X}_r}(\bm{x}) \lim_{k \to \infty} h_k(\bm{x})$ for $\bm{x}\in \mathbb{R}^n$, where 
\[h_k(\bm{x}) :=\mathbb{E}_{\pi} \left[ \prod_{j=1}^k 1_{\mathcal{X}\setminus \mathcal{X}_r}(\bm{\phi}_\pi^{\bm{x}}(j)) \cdot v(\bm{\phi}_\pi^{\bm{x}}(k+1)) \right]
\]
for $k\in \mathbb{N}$. We note that when $k=0$, the product is taken over an empty index set, and by convention, the empty product equals 1. Thus, 
$h_0(\bm{x})=\mathbb{E}_{\pi}[v(\bm{\phi}_\pi^{\bm{x}}(1))]=\mathbb{E}_{\bm{\theta}}[v(\bm{f}(\bm{x},\bm{\theta}))]$.

For this sake, we prove by induction that for all \(k \in \mathbb{N}\),
\[
\begin{split}
\zeta_k(\bm{x}) :&= \underbrace{\mathbb{E}_{\pi}\left[ \sum_{i=0}^k \prod_{j=0}^{i-1} 1_{\mathcal{X}\setminus \mathcal{X}_r}(\bm{\phi}_\pi^{\bm{x}}(j)) \cdot 1_{\mathcal{X}_r}(\bm{\phi}_\pi^{\bm{x}}(i)) \right]}_{V_k(\bm{x})}+ \mathbb{E}_{\pi}\left[ \prod_{j=0}^k 1_{\mathcal{X}\setminus \mathcal{X}_r}(\bm{\phi}_\pi^{\bm{x}}(j)) \cdot v(\bm{\phi}_\pi^{\bm{x}}(k+1)) \right]= v(\bm{x}).
\end{split}
\]

\textbf{Base case} \(k=0\):
\[
\begin{split}
\zeta_0(\bm{x}) 
&= \mathbb{E}_{\pi}\left[ 1_{\mathcal{X}_r}(\bm{\phi}_\pi^{\bm{x}}(0)) \right]
+ \mathbb{E}_{\pi}\left[ 1_{\mathcal{X}\setminus \mathcal{X}_r}(\bm{\phi}_\pi^{\bm{x}}(0)) \cdot v(\bm{\phi}_\pi^{\bm{x}}(1)) \right] \\
&= 1_{\mathcal{X}_r}(\bm{x}) + 1_{\mathcal{X}\setminus \mathcal{X}_r}(\bm{x}) \cdot \mathbb{E}_{\bm{\theta}} \left[ v(\bm{f}(\bm{x}, \bm{\theta})) \right] = v(\bm{x}),
\end{split}
\]
where the last equality follows from the fixed-point condition on \(v\).

\textbf{Inductive step:} Assume the statement holds for some \(k \geq 0\), i.e., \(\zeta_k(\bm{x}) = v(\bm{x})\). Then,
\begin{align*}
\zeta_{k+1}(\bm{x}) 
= &\mathbb{E}_{\pi} \left[ \sum_{i=0}^{k+1} \prod_{j=0}^{i-1} 1_{\mathcal{X}\setminus \mathcal{X}_r}(\bm{\phi}_\pi^{\bm{x}}(j)) \cdot 1_{\mathcal{X}_r}(\bm{\phi}_\pi^{\bm{x}}(i)) \right]
+ \mathbb{E}_{\pi} \left[ \prod_{j=0}^{k+1} 1_{\mathcal{X}\setminus \mathcal{X}_r}(\bm{\phi}_\pi^{\bm{x}}(j)) \cdot v(\bm{\phi}_\pi^{\bm{x}}(k+2)) \right] \\
=& \underbrace{\mathbb{E}_{\pi} \left[ \sum_{i=0}^{k} \prod_{j=0}^{i-1} 1_{\mathcal{X}\setminus \mathcal{X}_r}(\bm{\phi}_\pi^{\bm{x}}(j)) \cdot 1_{\mathcal{X}_r}(\bm{\phi}_\pi^{\bm{x}}(i)) \right]}_{= V_k(\bm{x})}  + \mathbb{E}_{\pi} \left[ \prod_{j=0}^k 1_{\mathcal{X}\setminus \mathcal{X}_r}(\bm{\phi}_\pi^{\bm{x}}(j)) \cdot 1_{\mathcal{X}_r}(\bm{\phi}_\pi^{\bm{x}}(k+1)) \right] \\
&\quad\quad\quad\quad\quad\quad\quad\quad\quad\quad\quad\quad\quad\quad\quad\quad + \mathbb{E}_{\pi} \left[ \prod_{j=0}^{k+1} 1_{\mathcal{X}\setminus \mathcal{X}_r}(\bm{\phi}_\pi^{\bm{x}}(j)) \cdot v(\bm{\phi}_\pi^{\bm{x}}(k+2)) \right] \\
=& \zeta_k(\bm{x}) - \mathbb{E}_{\pi} \left[ \prod_{j=0}^k 1_{\mathcal{X}\setminus \mathcal{X}_r}(\bm{\phi}_\pi^{\bm{x}}(j)) \cdot v(\bm{\phi}_\pi^{\bm{x}}(k+1)) \right] \\
&\quad + \mathbb{E}_{\pi} \Bigg[ \prod_{j=0}^k 1_{\mathcal{X}\setminus \mathcal{X}_r}(\bm{\phi}_\pi^{\bm{x}}(j)) \cdot \Big( 1_{\mathcal{X}_r}(\bm{\phi}_\pi^{\bm{x}}(k+1))+ 1_{\mathcal{X}\setminus \mathcal{X}_r}(\bm{\phi}_\pi^{\bm{x}}(k+1)) \cdot \mathbb{E}_{\bm{\theta}} \left[ v(\bm{\phi}_\pi^{\bm{x}}(k+2)) \right] \Big)  \Bigg] \\
=& \zeta_k(\bm{x}) -1_{\mathcal{X}\setminus \mathcal{X}_r}(\bm{x})h_k(\bm{x}) + 1_{\mathcal{X}\setminus \mathcal{X}_r}(\bm{x})h_k(\bm{x}) \\
= &\zeta_k(\bm{x}) = v(\bm{x}),
\end{align*}
where $v(\bm{\phi}_\pi^{\bm{x}}(k+1))=1_{\mathcal{X}_r}(\bm{\phi}_\pi^{\bm{x}}(k+1)) + 1_{\mathcal{X}\setminus \mathcal{X}_r}(\bm{\phi}_\pi^{\bm{x}}(k+1)) \cdot \mathbb{E}_{\bm{\theta}} \left[ v(\bm{\phi}_\pi^{\bm{x}}(k+2)) \right]$, which can be obtained via the Bellman equation \eqref{bellman}.

By induction, \(\zeta_k(\bm{x}) = v(\bm{x})\) for all \(k\geq 0\).

Finally, taking the limit as \(k \to \infty\), we have
\[
\begin{split}
v(\bm{x})& = \lim_{k \to \infty} \zeta_k(\bm{x}) \\
&= \underbrace{\mathbb{E}_{\pi} \left[ \sum_{i=0}^\infty \prod_{j=0}^{i-1} 1_{\mathcal{X}\setminus \mathcal{X}_r}(\bm{\phi}_\pi^{\bm{x}}(j)) \cdot 1_{\mathcal{X}_r}(\bm{\phi}_\pi^{\bm{x}}(i)) \right]}_{= V(\bm{x})} + 1_{\mathcal{X}\setminus \mathcal{X}_r}(\bm{x})\lim_{k \to \infty} h_k(\bm{x}),
\end{split}
\]
which establishes
\begin{equation}
\label{9}
v(\bm{x}) = V(\bm{x}) + 1_{\mathcal{X}\setminus \mathcal{X}_r}(\bm{x}) \lim_{k \to \infty} h_k(\bm{x}).
\end{equation}

In the following, we show $\lim_{k\rightarrow \infty}  h_k(\bm{x})=0$ for $\bm{x}\in \mathcal{X}\setminus \mathcal{X}_r$. 

1).   For all \(\bm{x} \in \mathcal{X} \setminus \mathcal{X}_r\), the system \eqref{system} exits \(\mathcal{X} \setminus \mathcal{X}_r\) in finite time almost surely, i.e.,
   \[
   \mathbb{P}_{\pi} \left( \forall k \in \mathbb{N}.  \bm{\phi}^{\bm{x}}_{\pi}(k) \in \mathcal{X} \setminus \mathcal{X}_r \right) = 0.
   \]
   This implies that trajectories starting in \(\mathcal{X} \setminus \mathcal{X}_r\) will almost surely either
   \begin{enumerate}
       \item enter the target set \(\mathcal{X}_r\), or  
       \item leave the safe set \(\mathcal{X}\) (i.e., enter \(\mathbb{R}^n \setminus \mathcal{X}\))  
   \end{enumerate}
   within finite time.

2). The function \(v(\cdot)\) is bounded over \(\mathbb{R}^n\). Thus, there exists a constant \(M > 0\) such that 
   \[
   |v(\bm{y})| \leq M, \quad \forall \bm{y} \in \mathbb{R}^n.
   \]

3). Define the event:
   \[
   A_k(\bm{x}) :=\left\{ \bm{\phi}^{\bm{x}}_{\pi}(j) \in \mathcal{X} \setminus \mathcal{X}_r \text{ for all } j = 1, 2, \dots, k \right\}.
   \]
   This represents the set of disturbance signals \(\pi\) where the trajectory remains in \(\mathcal{X} \setminus \mathcal{X}_r\) from time \(1\) to \(k\).

4). Assumption \ref{assump1} implies that the event \(\bigcap_{i=1}^{\infty} A_i(\bm{x})\) has probability zero, i.e.,
   \[
   \mathbb{P}_{\pi} \left( \bigcap_{k=1}^{\infty} A_k(\bm{x}) \right) = \mathbb{P}_{\pi} \left( \forall k \geq 1.  \bm{\phi}^{\bm{x}}_{\pi}(k) \in \mathcal{X} \setminus \mathcal{X}_r \right) = 0.
   \]
   Since \(A_{k+1}(\bm{x}) \subseteq A_k(\bm{x})\) (the sequence is nested), continuity of probability measures gives
   \[
   \lim_{k \to \infty} \mathbb{P}_{\pi} \left( A_k(\bm{x}) \right) = \mathbb{P}_{\pi} \left( \bigcap_{k=1}^{\infty} A_k(\bm{x}) \right) = 0.
   \]

5). The term \(\prod_{j=1}^{k} 1_{\mathcal{X} \setminus \mathcal{X}_r} (\bm{\phi}^{\bm{x}}_{\pi}(j))\) is the indicator of \(A_k(\bm{x})\). Using the boundedness of \(v\) (assume $|v| \leq M$ over $\mathcal{X}\setminus \mathcal{X}_r$), we have 
   \[
   \begin{split}
   |h_k(\bm{x})| &\leq \mathbb{E}_{\pi} \left[ \prod_{j=1}^{k} 1_{\mathcal{X} \setminus \mathcal{X}_r} (\bm{\phi}_\pi^{\bm{x}}(j)) \cdot |v(\bm{\phi}_\pi^{\bm{x}}(k+1)| \right] \leq M \cdot \mathbb{E}_{\pi} \left[ \prod_{j=1}^{k} 1_{\mathcal{X} \setminus \mathcal{X}_r} (\bm{\phi}_\pi^{\bm{x}}(j)) \right] = M \cdot \mathbb{P}_{\pi} \left( A_k(\bm{x}) \right).
   \end{split}
   \]
   As \(k \to \infty\), we have
   \[
   \lim_{k \to \infty} |h_k(\bm{x})| \leq M \cdot \lim_{k \to \infty} \mathbb{P}_{\pi} \left( A_k(\bm{x}) \right) = 0.
   \]
   Hence, \(\lim_{k \to \infty} h_k(\bm{x}) = 0\) for $\bm{x}\in \mathcal{X}\setminus \mathcal{X}_r$. Consequently, $h_k(\bm{x})=0$ for $\bm{x}\in \mathcal{X}\setminus \mathcal{X}_r$.

Finally, when \(\bm{x} \in \mathcal{X}_r\) or \(\bm{x} \in \mathbb{R}^n \setminus \mathcal{X}\), the term \(1_{\mathcal{X} \setminus \mathcal{X}_r}(\bm{x})\) in \eqref{9} is zero, thus $\lim_{k\rightarrow \infty} 1_{\mathcal{X}\setminus \mathcal{X}_r}(\bm{x}) h_k(\bm{x})=0$.  

Consequently, $v(\bm{x})=V(\bm{x})$ over $\mathbb{R}^n$. 
\end{proof}

Under Assumption \ref{assump1}, we can establish necessary and sufficient conditions for certifying lower bounds in reach-avoid verification by relaxing the Bellman equation \eqref{bellman}.

\begin{theorem}
\label{lower_condition}
 Let $\epsilon_2\in [0,1]$.  Under Assumption \ref{assump1}, there exists a function $v(\bm{x}): \mathbb{R}^n \rightarrow \mathbb{R}$, which is bounded in $\mathcal{X}$ and satisfies the following condition:
  \begin{equation}
 \label{constraints11}
     \begin{cases}
         v(\bm{x}_0)\geq \epsilon_2, \\
         v(\bm{x}) \leq \mathbb{E}_{\bm{\theta}}[v(\bm{f}(\bm{x},\bm{\theta}))], & \forall \bm{x}\in \mathcal{X}\setminus \mathcal{X}_r,\\
         v(\bm{x})\leq 1, &\forall \bm{x}\in \mathcal{X}_r, \\
         v(\bm{x})\leq 0,  & \forall \bm{x}\in \mathbb{R}^n\setminus \mathcal{X},
     \end{cases}
 \end{equation}
 if and only if 
 $\mathbb{P}_{\pi}(RA_{\bm{x}_0})\geq \epsilon_2$.
\end{theorem}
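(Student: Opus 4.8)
The plan is to prove the two directions separately, leveraging that, under Assumption \ref{assump1}, the value function $V$ of \eqref{value} is the unique bounded solution of the Bellman equation \eqref{bellman} (Proposition \ref{uni_solution}) and coincides with the exact reach-avoid probability (Lemma \ref{equal}). For the ``if'' direction I would simply exhibit $v=V$ as a witness. By Lemma \ref{equal}, $v(\bm{x}_0)=\mathbb{P}_{\pi}(RA_{\bm{x}_0})\geq \epsilon_2$, giving the first constraint; the Bellman equation \eqref{bellman} yields $V(\bm{x})=\mathbb{E}_{\bm{\theta}}[V(\bm{f}(\bm{x},\bm{\theta}))]$ on $\mathcal{X}\setminus\mathcal{X}_r$, $V(\bm{x})=1$ on $\mathcal{X}_r$, and $V(\bm{x})=0$ on $\mathbb{R}^n\setminus\mathcal{X}$ (since $\mathcal{X}_r\subseteq\mathcal{X}$ makes both indicators vanish there), while $V\in[0,1]$ gives boundedness in $\mathcal{X}$. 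This verifies \eqref{constraints11}, so this direction is routine.

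The substance is the ``only if'' direction, where I would show that any admissible $v$ satisfies $v(\bm{x})\leq V(\bm{x})$ pointwise, so that $\epsilon_2\leq v(\bm{x}_0)\leq V(\bm{x}_0)=\mathbb{P}_{\pi}(RA_{\bm{x}_0})$. First I would merge the three inequalities of \eqref{constraints11} into the single sub-solution inequality $v(\bm{x})\leq 1_{\mathcal{X}_r}(\bm{x})+1_{\mathcal{X}\setminus\mathcal{X}_r}(\bm{x})\mathbb{E}_{\bm{\theta}}[v(\bm{f}(\bm{x},\bm{\theta}))]$ valid for every $\bm{x}\in\mathbb{R}^n$, checking the three regions case by case (the bound $v\leq 0$ off $\mathcal{X}$ handles $\mathbb{R}^n\setminus\mathcal{X}$, and $v\leq 1$ handles $\mathcal{X}_r$). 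Iterating this inequality exactly as in the unrolling of \eqref{bellman_uni} in Proposition \ref{uni_solution}, but with ``$\leq$'' replacing ``$=$'' (the iteration preserving the direction because $\mathbb{E}_{\pi}$ is monotone and the coefficient $1_{\mathcal{X}\setminus\mathcal{X}_r}\geq 0$), gives for every finite horizon $N$ a bound $v(\bm{x})\leq \mathbb{E}_{\pi}[1_{\mathcal{X}_r}(\bm{\phi}_{\pi}^{\bm{x}}(0))+\sum_{i=1}^{N}\prod_{j=0}^{i-1}1_{\mathcal{X}\setminus\mathcal{X}_r}(\bm{\phi}_{\pi}^{\bm{x}}(j))1_{\mathcal{X}_r}(\bm{\phi}_{\pi}^{\bm{x}}(i))]+h_N(\bm{x})$, where $h_N(\bm{x})=\mathbb{E}_{\pi}[\prod_{j=0}^{N}1_{\mathcal{X}\setminus\mathcal{X}_r}(\bm{\phi}_{\pi}^{\bm{x}}(j))\,v(\bm{\phi}_{\pi}^{\bm{x}}(N+1))]$ and the leading expectation increases to $V(\bm{x})$ by monotone convergence.

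The crux, and the step I expect to be the main obstacle, is controlling the remainder $h_N$. Here the precise hypothesis ``bounded in $\mathcal{X}$'' (rather than on all of $\mathbb{R}^n$) is exactly what makes the argument work: setting $M:=\max\{0,\sup_{\bm{x}\in\mathcal{X}}v(\bm{x})\}<\infty$, the constraint $v\leq 0$ on $\mathbb{R}^n\setminus\mathcal{X}$ gives the uniform upper bound $v(\bm{x})\leq M$ for all $\bm{x}\in\mathbb{R}^n$, even though $v$ need not be bounded below outside $\mathcal{X}$. Since the product of indicators is nonnegative, this yields $h_N(\bm{x})\leq M\,\mathbb{P}_{\pi}(\bm{\phi}_{\pi}^{\bm{x}}(j)\in\mathcal{X}\setminus\mathcal{X}_r,\ \forall j\in\mathbb{N}_{\leq N})$. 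The events $\{\bm{\phi}_{\pi}^{\bm{x}}(j)\in\mathcal{X}\setminus\mathcal{X}_r,\ \forall j\in\mathbb{N}_{\leq N}\}$ decrease to $\{\forall k\in\mathbb{N}.\ \bm{\phi}_{\pi}^{\bm{x}}(k)\in\mathcal{X}\setminus\mathcal{X}_r\}$, whose probability is $0$ by Assumption \ref{assump1}, so continuity from above of $\mathbb{P}_{\pi}$ forces this bound to tend to $0$.

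Letting $N\to\infty$ in the horizon-$N$ bound then gives $v(\bm{x})\leq V(\bm{x})$, and in particular $\mathbb{P}_{\pi}(RA_{\bm{x}_0})=V(\bm{x}_0)\geq v(\bm{x}_0)\geq\epsilon_2$, completing the proof. The only care needed is the sign bookkeeping in the remainder, namely that an upper bound on $v$ suffices precisely because we only need an upper bound on $h_N$, together with the measure-theoretic continuity-from-above step that converts Assumption \ref{assump1} into the vanishing of the remainder.
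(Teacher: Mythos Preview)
Your proposal is correct and follows essentially the same approach as the paper: both directions are handled identically, with the ``if'' part witnessed by $V$ and the ``only if'' part obtained by unrolling the sub-solution inequality as in \eqref{bellman_uni} and controlling the remainder via Assumption~\ref{assump1}. Your treatment of the remainder---bounding $v\leq M$ globally using $v\leq 0$ off $\mathcal{X}$ and then invoking continuity from above---is a slightly cleaner variant of the paper's device of replacing $v(\bm{\phi}_{\pi}^{\bm{x}}(i+1))$ by $1_{\mathcal{X}}(\bm{\phi}_{\pi}^{\bm{x}}(i+1))v(\bm{\phi}_{\pi}^{\bm{x}}(i+1))$, but the underlying idea is the same.
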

\begin{proof}
1) We first prove the ``only if'' part. 

Since $v(\bm{x})$ satisfies \eqref{constraints11}, by following the inductive argument used in the proof of Proposition \ref{uni_solution}-where we showed \[v(\bm{x}) = V(\bm{x}) + 1_{\mathcal{X}\setminus \mathcal{X}_r}(\bm{x}) \lim_{k \to \infty} h_k(\bm{x})\]
-but replacing the equality ``='' with ``$\leq$'', we obtain that 
\begin{equation*}
\begin{split}
v(\bm{x})\leq V(\bm{x})+1_{\mathcal{X}\setminus \mathcal{X}_r}(\bm{x})\lim_{k\rightarrow \infty} h_k(\bm{x})
\end{split}
\end{equation*}
for $\bm{x}\in \mathbb{R}^n$, where $V(\cdot): \mathbb{R}^n \rightarrow \mathbb{R}$ is the value function defined in \eqref{value}.
Since $v(\bm{\phi}_{\pi}^{\bm{x}}(k+1))\leq 0$ when $\bm{\phi}_{\pi}^{\bm{x}}(k+1)\in \mathbb{R}^n\setminus \mathcal{X}$, we have 
\begin{equation*}
\begin{split}
    h_k(\bm{x})&=\mathbb{E}_{\pi}[\prod_{j=1}^{k}1_{\mathcal{X}\setminus \mathcal{X}_r}(\bm{\phi}_{\pi}^{\bm{x}}(j))v(\bm{\phi}_{\pi}^{\bm{x}}(k+1))]\leq \mathbb{E}_{\pi}[\prod_{j=1}^{k}1_{\mathcal{X}\setminus \mathcal{X}_r}(\bm{\phi}_{\pi}^{\bm{x}}(j))w_{k+1}(\bm{x})],
    \end{split}
\end{equation*}
where $w_{k+1}(\bm{x})=1_{\mathcal{X}}(\bm{\phi}_{\pi}^{\bm{x}}(k+1))v(\bm{\phi}_{\pi}^{\bm{x}}(k+1))$. Also, since 
$v(\cdot): \mathbb{R}^n \rightarrow \mathbb{R}$ is bounded over $\mathcal{X}$ and 
$\mathbb{P}_{\pi}(\forall k\in \mathbb{N}. \bm{\phi}_{\pi}^{\bm{x}}(k)\in \mathcal{X}\setminus \mathcal{X}_r)=0$ for $\bm{x}\in \mathcal{X}\setminus \mathcal{X}_r$, we conclude 
$\lim_{k\rightarrow \infty} 1_{\mathcal{X}\setminus \mathcal{X}_r}(\bm{x}) h_k(\bm{x})=0$ for $\bm{x}\in \mathbb{R}^n$. Consequently, $v(\bm{x})\leq V(\bm{x})$ for $\bm{x}\in \mathbb{R}^n$.

Thus, $\mathbb{P}_{\pi}(RA_{\bm{x}_0})=V(\bm{x}_0)\geq v(\bm{x}_0) \geq \epsilon_2$.

2) We will prove the ``if'' part.

If  $\mathbb{P}_{\pi}(RA_{\bm{x}_0})\geq \epsilon_2$, we have $V(\bm{x}_0)\geq \epsilon_2$ according to Lemma \ref{equal}, where $V(\cdot):\mathbb{R}^n \rightarrow \mathbb{R}$ is the value function in \eqref{value}. Moreover, according to Proposition \ref{pro_bellman}, $V(\bm{x})$ satisfies 
\begin{equation*}
\begin{cases}
V(\bm{x}) = \mathbb{E}_{\bm{\theta}}[V(\bm{f}(\bm{x},\bm{\theta}))], &\forall \bm{x}\in \mathcal{X}\setminus \mathcal{X}_r,\\
V(\bm{x})=1, &\forall \bm{x}\in \mathcal{X}_r,\\
V(\bm{x})=0, &\forall \bm{x}\in \mathbb{R}^n\setminus \mathcal{X}.
\end{cases}
\end{equation*}
Consequently, $V(\bm{x})$ satisfies \eqref{constraints11}. 
\end{proof}

\begin{remark}
There is an important distinction between Proposition \ref{upper_condition_ra} and Theorem \ref{lower_condition} that we now clarify explicitly here:

1) Proposition \ref{upper_condition_ra} provides a condition for verifying upper bounds on reach-avoid probabilities and does not require Assumption \ref{assump1}. This makes it broadly applicable, particularly in settings where the system may remain within $\mathcal{X}\setminus \mathcal{X}_r$ indefinitely with nonzero probability.

2) Theorem \ref{lower_condition}, on the other hand, provides necessary and sufficient conditions for verifying lower bounds on reach-avoid probabilities. However, it relies on Assumption \ref{assump1}, which ensures that the probability of the system \eqref{system} staying in 
$\mathcal{X}\setminus \mathcal{X}_r$ for all time is zero. This assumption is essential to guarantee that, if there exists a function satisfying condition (10), then the specified threshold $\epsilon_2$ is indeed a valid lower bound for the reach-avoid probability $\mathbb{P}_{\pi}(RA_{\bm{x}_0})$. Without Assumption \ref{assump1}, we cannot use condition \eqref{constraints11} to justify lower bounds in the reach-avoid verification, since we cannot guarantee $\lim_{i\rightarrow \infty} h_i(\bm{x})=0$ for $\bm{x}\in \mathcal{X}\setminus \mathcal{X}_r$. 
\end{remark}

\begin{remark}
\label{remark4}
As discussed in Remark \ref{remark1}, we can also revise condition \eqref{constraints11} to establish a necessary and sufficient criterion for ensuring that 
 $\mathbb{P}_{\pi}(RA_{\bm{x}})\geq \epsilon_2, \forall \bm{x}\in \mathcal{X}_0$, where $\mathcal{X}_0$ is a set of initial states.
\end{remark}

\subsection{Reach-avoid Verification II}
\label{rv_2}
The subsection will formulate a necessary and sufficient  barrier-like condition for the reach-avoid verification without Assumption \ref{assump1}. Instead, another assumption that $\epsilon_2$ is strictly smaller than the exact reach-avoid probability $\mathbb{P}_{\pi}(RA_{\bm{x}_0})$ is imposed. Similar to the one in Subsection \ref{rv_1}, this condition is constructed by relaxing a Bellman equation, which is derived from a discounted value function.

Let's start with the discounted value function $\tilde{V}_{\gamma}(\cdot):\mathbb{R}^n \rightarrow \mathbb{R}$,
\begin{equation}
    \label{value_gamma}
    \tilde{V}_{\gamma}(\bm{x}):=\mathbb{E}_{\pi}[\tilde{g}_{\gamma}(\bm{x})],
\end{equation}
where 
\[\tilde{g}_{\gamma}(\bm{x})=1_{\mathcal{X}_r}(\bm{\phi}_{\pi}^{\bm{x}}(0))+\sum_{i\in \mathbb{N}_{\geq 1}}\gamma^{i} \prod_{j=0}^{i-1} 1_{\mathcal{X}\setminus \mathcal{X}_r}(\bm{\phi}_{\pi}^{\bm{x}}(j)) 1_{\mathcal{X}_r}(\bm{\phi}_{\pi}^{\bm{x}}(i))\] 
and $\gamma \in [0,1]$ is a user-defined value. 

The value $\tilde{V}_{\gamma}(\bm{x})$ in \eqref{value_gamma} is a lower bound of the exact reach-avoid probability $\mathbb{P}_{\pi}(RA_{\bm{x}})$ for $\bm{x}\in \mathbb{R}^n$. Moreover, when $\gamma$ approaches $1$,  $\tilde{V}_{\gamma}(\bm{x})$ will approach $\mathbb{P}_{\pi}(RA_{\bm{x}})$ for $\bm{x}\in \mathbb{R}^n$. 

\begin{lemma}
\label{lower_convergence}
For $\bm{x}\in \mathbb{R}^n$,
\[\tilde{V}_{\gamma}(\bm{x})\leq \mathbb{P}_{\pi}(RA_{\bm{x}})\] and
\[\lim_{\gamma\rightarrow 1^-} \tilde{V}_{\gamma}(\bm{x})=\mathbb{P}_{\pi}(RA_{\bm{x}}),\]
where  $\tilde{V}(\cdot):\mathbb{R}^n \rightarrow \mathbb{R}$ is the value function in \eqref{value_gamma}.
\end{lemma}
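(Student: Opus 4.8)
The plan is to exploit the fact that the discounted integrand $\tilde{g}_{\gamma}(\bm{x})$ differs from the undiscounted integrand $g(\bm{x})$ of \eqref{value} only through the factors $\gamma^{i}$ weighting each summand, and that by Lemma \ref{equal} the reach-avoid probability is exactly $\mathbb{P}_{\pi}(RA_{\bm{x}}) = V(\bm{x}) = \mathbb{E}_{\pi}[g(\bm{x})]$. Both claims then reduce to comparing $\tilde{g}_{\gamma}(\bm{x})$ with $g(\bm{x})$ pointwise in the disturbance signal $\pi$ and passing this comparison through the expectation.

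First I would observe that for every fixed $\pi$ the summands indexed by different $i$ are mutually exclusive: the product $\prod_{j=0}^{i-1} 1_{\mathcal{X}\setminus\mathcal{X}_r}(\bm{\phi}_{\pi}^{\bm{x}}(j)) 1_{\mathcal{X}_r}(\bm{\phi}_{\pi}^{\bm{x}}(i))$ is nonzero for at most one index $i$, namely the first hitting time of $\mathcal{X}_r$ along a trajectory that has stayed in $\mathcal{X}\setminus\mathcal{X}_r$ beforehand. Hence $g(\bm{x}) = 1_{RA_{\bm{x}}}(\pi) \in \{0,1\}$, and $\tilde{g}_{\gamma}(\bm{x}) = \gamma^{\tau(\pi)} 1_{RA_{\bm{x}}}(\pi)$, where $\tau(\pi)$ denotes that first hitting time (with the convention $\gamma^{0}=1$ for the $i=0$ summand). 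Since $\gamma \in [0,1]$ gives $\gamma^{\tau(\pi)} \leq 1$ and every summand is non-negative, we obtain $0 \leq \tilde{g}_{\gamma}(\bm{x}) \leq g(\bm{x})$ pointwise; taking $\mathbb{E}_{\pi}[\cdot]$ and invoking Lemma \ref{equal} yields $\tilde{V}_{\gamma}(\bm{x}) \leq \mathbb{E}_{\pi}[g(\bm{x})] = \mathbb{P}_{\pi}(RA_{\bm{x}})$, which is the first assertion.

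For the limit, the key is that $\gamma \mapsto \tilde{g}_{\gamma}(\bm{x})$ is nondecreasing in $\gamma$ for each fixed $\pi$ (because $\gamma^{\tau(\pi)}$ is nondecreasing on $[0,1]$) and converges pointwise to $1_{RA_{\bm{x}}}(\pi) = g(\bm{x})$ as $\gamma \uparrow 1$, since $\gamma^{\tau(\pi)} \to 1$ for each finite $\tau(\pi)$ and the integrand vanishes identically off $RA_{\bm{x}}$. I would then pass to the limit inside the expectation: because the family is monotone in $\gamma$ and dominated by the integrable function $g(\bm{x}) \leq 1$, the Monotone Convergence Theorem (equivalently, Dominated Convergence) gives $\lim_{\gamma \to 1^{-}} \mathbb{E}_{\pi}[\tilde{g}_{\gamma}(\bm{x})] = \mathbb{E}_{\pi}[g(\bm{x})] = \mathbb{P}_{\pi}(RA_{\bm{x}})$.

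The main obstacle is the rigorous justification of interchanging the limit with the expectation, compounded by the fact that $\gamma \to 1^{-}$ is a continuous rather than a sequential limit. This is resolved precisely by the monotonicity in $\gamma$: it suffices to verify the conclusion along an arbitrary sequence $\gamma_n \uparrow 1$ via the Monotone Convergence Theorem, and monotonicity of $\tilde{V}_{\gamma}(\bm{x})$ in $\gamma$ then upgrades the sequential limit to the full one-sided limit. No further integrability hypotheses are required, since $g(\bm{x})$ is bounded by $1$ on the probability space $\Theta^{\infty}$.
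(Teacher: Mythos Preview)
Your proposal is correct but follows a genuinely different route from the paper. The paper proves the limit statement by treating $\tilde{V}_{\gamma}(\bm{x})$ as an infinite series in $i$ and showing that this series converges \emph{uniformly in $\gamma\in[0,1]$}: it uses the Cauchy criterion, bounding the discounted tail $\sum_{k=m+1}^{M}\mathbb{E}_{\pi}[\gamma^{k}\prod_{j}1_{\mathcal{X}\setminus\mathcal{X}_r}\cdot 1_{\mathcal{X}_r}]$ by the undiscounted tail, which is small because $V(\bm{x})$ is a convergent series. It then invokes the Term-by-term Continuity Theorem to pass the limit $\gamma\to 1^{-}$ through the sum. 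By contrast, you work pointwise in the disturbance signal $\pi$: you exploit the fact that at most one summand is nonzero to write $\tilde{g}_{\gamma}(\bm{x})=\gamma^{\tau(\pi)}1_{RA_{\bm{x}}}(\pi)$, observe monotonicity in $\gamma$, and pass to the limit under $\mathbb{E}_{\pi}$ via the Monotone (or Dominated) Convergence Theorem. Your argument is more probabilistic and arguably more direct, since it bypasses the uniform-convergence machinery entirely and makes explicit use of the indicator structure of the summands; the paper's series-analytic approach, on the other hand, does not rely on the summands being mutually exclusive and would extend verbatim to settings where the individual terms are merely non-negative and the undiscounted series converges.
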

\begin{proof}
  The conclusion $\tilde{V}_{\gamma}(\bm{x})\leq \mathbb{P}_{\pi}(RA_{\bm{x}})$ can be justified according to $\gamma\in [0,1]$  and Lemma \ref{equal}.

In the following, we just show $\lim_{\gamma\rightarrow 1^-} \tilde{V}_{\gamma}(\bm{x})=\mathbb{P}_{\pi}(RA_{\bm{x}})$. 

1) We first show $\tilde{V}_{\gamma}(\bm{x})$ is uniformly convergent over $\gamma \in [0,1]$.  According to Lemma \ref{equal}, $\mathbb{P}_{\pi}(RA_{\bm{x}})=V(\bm{x})$, where $V(\cdot): \mathbb{R}^n \rightarrow \mathbb{R}$ is the value function in \eqref{value}. Thus, for every $\epsilon>0$, there exists $N\in \mathbb{N}$ such that 
\[\sum_{k=m+1}^M \mathbb{E}_{\pi} [\prod_{j=0}^{k-1} 1_{\mathcal{X}\setminus \mathcal{X}_r}(\bm{\phi}_{\pi}^{\bm{x}}(j)) 1_{\mathcal{X}_r}(\bm{\phi}_{\pi}^{\bm{x}}(k))]<\epsilon, \forall M>m>N,\]
where $M,m\in \mathbb{N}$.
Since 
\[
\begin{split}
    &\sum_{k=m+1}^M \mathbb{E}_{\pi} [\gamma^{k} \prod_{j=0}^{k-1} 1_{\mathcal{X}\setminus \mathcal{X}_r}(\bm{\phi}_{\pi}^{\bm{x}}(j)) 1_{\mathcal{X}_r}(\bm{\phi}_{\pi}^{\bm{x}}(k))]
    \leq \sum_{k=m+1}^M \mathbb{E}_{\pi} [\prod_{j=0}^{k-1} 1_{\mathcal{X}\setminus \mathcal{X}_r}(\bm{\phi}_{\pi}^{\bm{x}}(j)) 1_{\mathcal{X}_r}(\bm{\phi}_{\pi}^{\bm{x}}(k))]
    \end{split}
    \]
    holds for $\gamma \in [0,1]$, we have $\tilde{V}_{\gamma}(\bm{x})$ is uniformly convergent over $\gamma \in [0,1]$. 

    In addition, $\mathbb{E}_{\pi}[\gamma^{i} \prod_{j=0}^{i-1} 1_{\mathcal{X}\setminus \mathcal{X}_r}(\bm{\phi}_{\pi}^{\bm{x}}(j)) 1_{\mathcal{X}_r}(\bm{\phi}_{\pi}^{\bm{x}}(i))]$ is continuous over $\gamma\in [0,1]$, where $i \in \mathbb{N}_{\geq 1}$. Therefore, according to Term-by-term Continuity Theorem, we obtain $\lim_{\gamma\rightarrow 1^-} \tilde{V}_{\gamma}(\bm{x})=\mathbb{E}_{\pi}[1_{\mathcal{X}_r}(\bm{\phi}_{\pi}^{\bm{x}}(0))+\sum_{i\in \mathbb{N}_{\geq 1}}\lim_{\gamma\rightarrow 1^-}\gamma^{i} \prod_{j=0}^{i-1} 1_{\mathcal{X}\setminus \mathcal{X}_r}(\bm{\phi}_{\pi}^{\bm{x}}(j)) 1_{\mathcal{X}_r}(\bm{\phi}_{\pi}^{\bm{x}}(i))]=V(\bm{x})=\mathbb{P}_{\pi}(RA_{\bm{x}})$.
  \end{proof}

  \begin{proposition}
  \label{pro_gamma}
When $\gamma \in [0,1)$, the value function \eqref{value_gamma} $\tilde{V}_{\gamma}(\cdot): \mathbb{R}^n \rightarrow \mathbb{R}$ in \eqref{value_gamma} satisfies the following Bellman equation:
 \begin{equation}
     \label{bellman_gamma}
     \tilde{V}_{\gamma}(\bm{x})=1_{\mathcal{X}_r}(\bm{x})+\gamma 1_{\mathcal{X}\setminus \mathcal{X}_r}(\bm{x})\mathbb{E}_{\bm{\theta}}[\tilde{V}_{\gamma}(\bm{f}(\bm{x},\bm{\theta}))]
 \end{equation}
 for $\bm{x}\in \mathbb{R}^n$. Moreover, the Bellman equation \eqref{bellman_gamma} possess a unique bounded solution over $\mathbb{R}^n$. 
  \end{proposition}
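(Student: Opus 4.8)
The plan is to establish the two assertions of the proposition in turn: first that $\tilde{V}_{\gamma}$ in \eqref{value_gamma} solves \eqref{bellman_gamma}, and then that \eqref{bellman_gamma} admits only one bounded solution. For the first assertion I would mimic the first-step factorization used in the proof of Proposition \ref{pro_bellman}, now tracking the extra discount factor. Concretely, I would verify the pointwise identity $\tilde{g}_{\gamma}(\bm{x})=1_{\mathcal{X}_r}(\bm{x})+\gamma\,1_{\mathcal{X}\setminus\mathcal{X}_r}(\bm{x})\,\tilde{g}_{\gamma}(\bm{y})$ with $\bm{y}=\bm{\phi}_{\pi}^{\bm{x}}(1)=\bm{f}(\bm{x},\bm{\theta})$: for $\bm{x}\in\mathcal{X}_r$ both sides equal $1$; for $\bm{x}\in\mathbb{R}^n\setminus\mathcal{X}$ both vanish; and for $\bm{x}\in\mathcal{X}\setminus\mathcal{X}_r$ the factor $1_{\mathcal{X}\setminus\mathcal{X}_r}(\bm{\phi}_{\pi}^{\bm{x}}(0))=1$ lets me pull out one power of $\gamma$ and reindex $i\mapsto i-1$, giving exactly $\gamma\,\tilde{g}_{\gamma}(\bm{y})$. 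Taking $\mathbb{E}_{\pi}[\cdot]$ and using the i.i.d.\ structure of $\pi$ to condition on $\bm{\theta}(0)$, so that $\mathbb{E}_{\pi}[\tilde{g}_{\gamma}(\bm{y})\mid\bm{\theta}(0)]=\tilde{V}_{\gamma}(\bm{f}(\bm{x},\bm{\theta}))$, then yields \eqref{bellman_gamma}.

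For the uniqueness assertion, the decisive feature is $\gamma<1$, which removes the need for Assumption \ref{assump1}. I would run the same telescoping unrolling as in \eqref{bellman_uni}, now with each expansion of a bounded solution $v$ of \eqref{bellman_gamma} carrying an explicit power of $\gamma$; after $n$ expansions this gives $v(\bm{x})=\mathbb{E}_{\pi}[1_{\mathcal{X}_r}(\bm{\phi}_{\pi}^{\bm{x}}(0))+\sum_{i=1}^{n}\gamma^{i}\prod_{j=0}^{i-1}1_{\mathcal{X}\setminus\mathcal{X}_r}(\bm{\phi}_{\pi}^{\bm{x}}(j))1_{\mathcal{X}_r}(\bm{\phi}_{\pi}^{\bm{x}}(i))]+r_n(\bm{x})$, with remainder $r_n(\bm{x})=\gamma^{n+1}\mathbb{E}_{\pi}[\prod_{j=0}^{n}1_{\mathcal{X}\setminus\mathcal{X}_r}(\bm{\phi}_{\pi}^{\bm{x}}(j))\,v(\bm{\phi}_{\pi}^{\bm{x}}(n+1))]$. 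Since $v$ is bounded and the indicator product is at most $1$, we have $\abs{r_n(\bm{x})}\leq\gamma^{n+1}\sup_{\bm{z}}\abs{v(\bm{z})}\to0$ as $n\to\infty$ purely because $\gamma<1$, so $v(\bm{x})=\tilde{V}_{\gamma}(\bm{x})$ for all $\bm{x}\in\mathbb{R}^n$. Equivalently, one may phrase this as a contraction-mapping argument: the operator $(Tv)(\bm{x})=1_{\mathcal{X}_r}(\bm{x})+\gamma\,1_{\mathcal{X}\setminus\mathcal{X}_r}(\bm{x})\,\mathbb{E}_{\bm{\theta}}[v(\bm{f}(\bm{x},\bm{\theta}))]$ maps bounded functions to bounded functions and satisfies $\|Tv_1-Tv_2\|_{\infty}\leq\gamma\|v_1-v_2\|_{\infty}$, so Banach's fixed-point theorem on the Banach space of bounded functions gives a unique bounded fixed point, which must be $\tilde{V}_{\gamma}$ by the first assertion.

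The main obstacle is essentially bookkeeping rather than a genuine difficulty: the substance of the proof is concentrated in controlling the remainder $r_n$, and here the geometric factor $\gamma^{n+1}$ does all the work, in sharp contrast to Subsection \ref{rv_1}, where the analogous tail $h_i$ vanishes only under the almost-sure-exit Assumption \ref{assump1}. The minor technical points to discharge are the justification of the term-by-term expectation manipulations, handled by the monotone convergence of the nonnegative, uniformly summable series already exploited in Lemma \ref{lower_convergence}, and the verification that $\tilde{V}_{\gamma}$ is bounded, which is immediate from $0\leq\tilde{V}_{\gamma}(\bm{x})\leq\mathbb{P}_{\pi}(RA_{\bm{x}})\leq1$ by Lemma \ref{lower_convergence}.
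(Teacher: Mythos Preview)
Your proposal is correct and mirrors the paper's proof: the paper likewise defers the first assertion to the argument of Proposition~\ref{pro_bellman} and proves uniqueness by the same telescoping unrolling of \eqref{bellman_gamma}, using boundedness of $v$ together with $\gamma^{i+1}\to 0$ to force the remainder (the paper's $h_i$, your $r_n$) to vanish. Your contraction-mapping reformulation is a clean equivalent not spelled out in the paper, but the core argument is the same.
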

\begin{proof}  
The conclusion that the value function \eqref{value_gamma} satisfies the Bellman equation \eqref{bellman_gamma} can be justified by following the proof of Proposition \ref{pro_bellman}.

In the following, we just show that if a bounded function $v(\bm{x}): \mathbb{R}^n \rightarrow \mathbb{R}$ satisfies the Bellman equation \eqref{bellman_gamma}, $v(\bm{x})=\tilde{V}_{\gamma}(\bm{x})$ holds for $\bm{x}\in \mathbb{R}^n$. 

We first show that \[
v(\bm{x}) = \tilde{V}_{\gamma}(\bm{x}) + 1_{\mathcal{X}\setminus \mathcal{X}_r}(\bm{x}) \lim_{k \to \infty} h_k(\bm{x})
\]
for $\bm{x}\in \mathbb{R}^n$, where 
\[
h_k(\bm{x}) := \gamma^{k+1}\mathbb{E}_{\pi} \left[ \prod_{j=1}^k 1_{\mathcal{X}\setminus \mathcal{X}_r}(\bm{\phi}_\pi^{\bm{x}}(j)) \cdot v(\bm{\phi}_\pi^{\bm{x}}(k+1)) \right].
\]
We note that when $k=0$, the product is taken over an empty index set, and by convention, the empty product equals 1. Therefore, 
$h_0(\bm{x})=\gamma \mathbb{E}_{\pi}[v(\bm{\phi}_\pi^{\bm{x}}(1))]=\gamma \mathbb{E}_{\bm{\theta}}[v(\bm{f}(\bm{x},\bm{\theta}))].$

For this sake, we prove by induction that for all \(k \in \mathbb{N}\),
\[
\begin{split}
\zeta_k(\bm{x}) :&= \underbrace{\mathbb{E}_{\pi} \left[ \sum_{i=0}^{k} \gamma^i \prod_{j=0}^{i-1} 1_{\mathcal{X}\setminus \mathcal{X}_r}(\bm{\phi}_\pi^{\bm{x}}(j)) \cdot 1_{\mathcal{X}_r}(\bm{\phi}_\pi^{\bm{x}}(i)) \right]}_{= \tilde{V}_{k,\gamma}(\bm{x})} 
+ \gamma^{k+1}\mathbb{E}_{\pi}\left[ \prod_{j=0}^k 1_{\mathcal{X}\setminus \mathcal{X}_r}(\bm{\phi}_\pi^{\bm{x}}(j)) \cdot v(\bm{\phi}_\pi^{\bm{x}}(k+1)) \right]\\
&= v(\bm{x}).
\end{split}
\]

\textbf{Base case} \(k=0\):
\[
\begin{split}
\zeta_0(\bm{x}) 
&= \mathbb{E}_{\pi}\left[ 1_{\mathcal{X}_r}(\bm{\phi}_\pi^{\bm{x}}(0)) \right]
+\gamma  \mathbb{E}_{\pi}\left[ 1_{\mathcal{X}\setminus \mathcal{X}_r}(\bm{\phi}_\pi^{\bm{x}}(0)) \cdot v(\bm{\phi}_\pi^{\bm{x}}(1)) \right] \\
&= 1_{\mathcal{X}_r}(\bm{x}) + \gamma 1_{\mathcal{X}\setminus \mathcal{X}_r}(\bm{x}) \cdot \mathbb{E}_{\bm{\theta}} \left[ v(\bm{f}(\bm{x}, \bm{\theta})) \right] \\
&= v(\bm{x}),
\end{split}
\]
where the last equality follows from the Bellman equation \eqref{bellman_gamma}.

\textbf{Inductive step:} Assume the statement holds for some \(k \geq 0\), i.e., \(\zeta_k(\bm{x}) = v(\bm{x})\). Then,
\begin{align*}
&\zeta_{k+1}(\bm{x}) \\
&= \mathbb{E}_{\pi} \left[ \sum_{i=0}^{k+1} \gamma^i \prod_{j=0}^{i-1} 1_{\mathcal{X}\setminus \mathcal{X}_r}(\bm{\phi}_\pi^{\bm{x}}(j)) \cdot 1_{\mathcal{X}_r}(\bm{\phi}_\pi^{\bm{x}}(i)) \right]+ \gamma^{k+2} \mathbb{E}_{\pi} \left[ \prod_{j=0}^{k+1} 1_{\mathcal{X}\setminus \mathcal{X}_r}(\bm{\phi}_\pi^{\bm{x}}(j)) \cdot v(\bm{\phi}_\pi^{\bm{x}}(k+2)) \right] \\
&= \underbrace{\mathbb{E}_{\pi} \left[ \sum_{i=0}^{k} \gamma^i \prod_{j=0}^{i-1} 1_{\mathcal{X}\setminus \mathcal{X}_r}(\bm{\phi}_\pi^{\bm{x}}(j)) \cdot 1_{\mathcal{X}_r}(\bm{\phi}_\pi^{\bm{x}}(i)) \right]}_{= \tilde{V}_{k,\gamma}(\bm{x})} + \mathbb{E}_{\pi} \left[ \gamma^{k+1} \prod_{j=0}^k 1_{\mathcal{X}\setminus \mathcal{X}_r}(\bm{\phi}_\pi^{\bm{x}}(j)) \cdot 1_{\mathcal{X}_r}(\bm{\phi}_\pi^{\bm{x}}(k+1)) \right] \\
&\quad+ \gamma^{k+2}\mathbb{E}_{\pi} \left[ \prod_{j=0}^{k+1} 1_{\mathcal{X}\setminus \mathcal{X}_r}(\bm{\phi}_\pi^{\bm{x}}(j)) \cdot v(\bm{\phi}_\pi^{\bm{x}}(k+2)) \right] \\
&=\zeta_k(\bm{x}) - \gamma^{k+1}\mathbb{E}_{\pi} \left[ \prod_{j=0}^k 1_{\mathcal{X}\setminus \mathcal{X}_r}(\bm{\phi}_\pi^{\bm{x}}(j)) \cdot v(\bm{\phi}_\pi^{\bm{x}}(k+1)) \right] \\
&\quad\quad+ \gamma^{k+1}\mathbb{E}_{\pi} \Bigg[ \prod_{j=0}^k 1_{\mathcal{X}\setminus \mathcal{X}_r}(\bm{\phi}_\pi^{\bm{x}}(j)) \cdot \Big( 1_{\mathcal{X}_r}(\bm{\phi}_\pi^{\bm{x}}(k+1)) + \gamma 1_{\mathcal{X}\setminus \mathcal{X}_r}(\bm{\phi}_\pi^{\bm{x}}(k+1)) \cdot \mathbb{E}_{\bm{\theta}} \left[ v(\bm{\phi}_\pi^{\bm{x}}(k+2)) \right] \Big)  \Bigg] \\
&=\zeta_k(\bm{x}) -1_{\mathcal{X}\setminus \mathcal{X}_r}(\bm{x})h_k(\bm{x}) + 1_{\mathcal{X}\setminus \mathcal{X}_r}(\bm{x})h_k(\bm{x}) \\
&=\zeta_k(\bm{x}) = v(\bm{x}).
\end{align*}

By induction, \(\zeta_k(\bm{x}) = v(\bm{x})\) for all \(k\).

Finally, taking the limit as \(k \to \infty\), we have
\[
\begin{split}
v(\bm{x}) = \lim_{k \to \infty} \zeta_k(\bm{x}) = \underbrace{\mathbb{E}_{\pi} \left[ \sum_{i=0}^\infty \gamma^{i}\prod_{j=0}^{i-1} 1_{\mathcal{X}\setminus \mathcal{X}_r}(\bm{\phi}_\pi^{\bm{x}}(j)) \cdot 1_{\mathcal{X}_r}(\bm{\phi}_\pi^{\bm{x}}(i)) \right]}_{= \tilde{V}_{\gamma}(\bm{x})} + 1_{\mathcal{X}\setminus \mathcal{X}_r}(\bm{x})\lim_{k \to \infty} h_k(\bm{x}),
\end{split}
\]
which establishes
\[
v(\bm{x}) = \tilde{V}_{\gamma}(\bm{x}) + 1_{\mathcal{X}\setminus \mathcal{X}_r}(\bm{x}) \lim_{k \to \infty} h_k(\bm{x}).
\]

Since $v(\cdot):\mathbb{R}^n \rightarrow \mathbb{R}$ is bounded over 
$\mathbb{R}^n$, we have $\lim_{k\rightarrow \infty} h_k(\bm{x})=0$ for $\bm{x}\in \mathbb{R}^n$ and consequently, $v(\bm{x})=\tilde{V}_{\gamma}(\bm{x})$ over $\mathbb{R}^n$. 
\end{proof}

We can construct a necessary and sufficient  barrier-like condition for the reach-avoid verification in Definition \ref{reach-avoid} by relaxing the Bellman equation \eqref{bellman_gamma}, under the assumption that the reach-avoid probability $\mathbb{P}_{\pi}(RA_{\bm{x}_0})$ is strictly larger than the threshold $\epsilon_2$. This condition is the stochastic version of the one in Corollary 1 in \cite{zhao2022inner}.
\begin{assumption}
\label{strict_small}
   The reach-avoid probability $\mathbb{P}_{\pi}(RA_{\bm{x}_0})$ is strictly larger than the threshold $\epsilon_2$, i.e., \[\mathbb{P}_{\pi}(RA_{\bm{x}_0})>\epsilon_2,\] where $\epsilon_2\in [0,1)$.
\end{assumption}

Assumption \ref{strict_small} is not overly restrictive and does not generally compromise the practical utility of the proposed method. In practice, since $\mathbb{P}_{\pi}(RA_{\bm{x}_0})$ is unknown, it is rare for the threshold $\epsilon_2$ set by engineers to exactly match $\mathbb{P}_{\pi}(RA_{\bm{x}_0})$. Therefore, either $\epsilon_2$ tends to be larger or smaller than $\mathbb{P}_{\pi}(RA_{\bm{x}_0})$, with both cases occurring frequently. When $\epsilon_2$ exceeds $\mathbb{P}_{\pi}(RA_{\bm{x}_0})$, certification is not possible, as the claim becomes infeasible. In contrast, the case where $\epsilon_2$ is smaller than $\mathbb{P}_{\pi}(RA_{\bm{x}_0})$, which corresponds to Assumption \ref{strict_small}, is the focus of this work.

\begin{theorem}
\label{thm3}
Let $\epsilon_2\in [0,1)$. If there exist a constant $\gamma \in (0,1)$ and a function $v(\bm{x}):\mathbb{R}^n \rightarrow \mathbb{R}$, which is bounded over $\mathcal{X}$ and satisfies the following condition: 
\begin{equation}
\label{constraint_gamma0}
    \begin{cases}
        v(\bm{x}_0)\geq \epsilon_2,\\
         v(\bm{x}) \leq \gamma \mathbb{E}_{\bm{\theta}}[v(\bm{f}(\bm{x},\bm{\theta}))], & \forall \bm{x}\in \mathcal{X}\setminus \mathcal{X}_r,\\
         v(\bm{x})\leq 1, &\forall \bm{x}\in \mathcal{X}_r, \\
         v(\bm{x})\leq 0, & \forall \bm{x}\in \mathbb{R}^n\setminus \mathcal{X},
    \end{cases}
\end{equation}
then, $\mathbb{P}_{\pi}(RA_{\bm{x}_0})\geq \epsilon_2$.  Moreover, under Assumption~\ref{strict_small}, there indeed exist such a constant 
$\gamma \in (0,1)$ and a function $v(\bm{x}):\mathbb{R}^n \rightarrow \mathbb{R}$, bounded over $\mathcal{X}$, that satisfy \eqref{constraint_gamma0}.
\end{theorem}
\begin{proof}
1) Since $v(\bm{x})$ satisfies \eqref{constraint_gamma0}, by following the inductive argument used in the proof of Proposition \ref{uni_solution}-where we showed \[v(\bm{x}) = \tilde{V}_{\gamma}(\bm{x}) + 1_{\mathcal{X}\setminus \mathcal{X}_r}(\bm{x}) \lim_{k \to \infty} h_k(\bm{x})\]
-but replacing the equality ``='' with ``$\leq$'', we obtain that 
\begin{equation*}
\begin{split}
v(\bm{x})\leq \tilde{V}_{\gamma}(\bm{x})+1_{\mathcal{X}\setminus \mathcal{X}_r}(\bm{x})\lim_{k\rightarrow \infty} h_k(\bm{x})
\end{split}
\end{equation*}
for $\bm{x}\in \mathbb{R}^n$, where $\tilde{V}_{\gamma}(\cdot): \mathbb{R}^n \rightarrow \mathbb{R}$ is the value function defined in \eqref{value_gamma}. Since $v(\bm{\phi}_{\pi}^{\bm{x}}(k+1))\leq 0$ when $\bm{\phi}_{\pi}^{\bm{x}}(k+1)\in \mathbb{R}^n\setminus \mathcal{X}$, we have
\begin{equation*}
\begin{split}
    h_k(\bm{x})&=\gamma^{k+1}\mathbb{E}_{\pi}[\prod_{j=1}^{k}1_{\mathcal{X}\setminus \mathcal{X}_r}(\bm{\phi}_{\pi}^{\bm{x}}(j))v(\bm{\phi}_{\pi}^{\bm{x}}(k+1))]\leq \gamma^{k+1}\mathbb{E}_{\pi}[\prod_{j=1}^{k}1_{\mathcal{X}\setminus \mathcal{X}_r}(\bm{\phi}_{\pi}^{\bm{x}}(j))w_{k+1}(\bm{x})],
    \end{split}
\end{equation*}
where $w_{k+1}(\bm{x})=1_{\mathcal{X}}(\bm{\phi}_{\pi}^{\bm{x}}(k+1))v(\bm{\phi}_{\pi}^{\bm{x}}(k+1))$. Also, since 
$v(\cdot): \mathbb{R}^n \rightarrow \mathbb{R}$ is bounded over $\mathcal{X}$ and 
$\lim_{k\rightarrow \infty} \gamma^{k+1}=0$, we conclude 
\[\lim_{k\rightarrow \infty} h_k(\bm{x})=0\] for $\bm{x}\in \mathbb{R}^n$. Consequently, $v(\bm{x})\leq \tilde{V}_{\gamma}(\bm{x})$ for $\bm{x}\in \mathbb{R}^n$.

Thus, $\mathbb{P}_{\pi}(RA_{\bm{x}_0})\geq \tilde{V}_{\gamma}(\bm{x}_0)\geq v(\bm{x}_0) \geq \epsilon_2$ according to Lemma \ref{lower_convergence}.

2) According to Lemma \ref{lower_convergence}, $\lim_{\gamma\rightarrow 1^-} \tilde{V}_{\gamma}(\bm{x}_0)=\mathbb{P}_{\pi}(RA_{\bm{x}_0})$ holds. Since $\mathbb{P}_{\pi}(RA_{\bm{x}_0})>\epsilon_2$, there exists $\gamma_0$ such that $\tilde{V}_{\gamma_0}(\bm{x}_0)\geq \epsilon_2$ according to Lemma \ref{lower_convergence}. Moreover, according to Proposition \ref{pro_gamma}, $\tilde{V}_{\gamma_0}(\bm{x})$ satisfies 
\begin{equation*}
    \begin{cases}
        \tilde{V}_{\gamma_0} = \gamma_0 \mathbb{E}_{\bm{\theta}}[\tilde{V}_{\gamma_0}(\bm{f}(\bm{x},\bm{\theta}))], &\forall \bm{x}\in \mathcal{X}\setminus \mathcal{X}_r,\\
        \tilde{V}_{\gamma_0}(\bm{x})= 1, &\forall \bm{x}\in \mathcal{X}_r,\\
        \tilde{V}_{\gamma_0}(\bm{x})= 0, &\forall \bm{x}\in \mathbb{R}^n\setminus \mathcal{X}.
    \end{cases}
\end{equation*} Consequently, $\tilde{V}_{\gamma_0}(\bm{x})$ satisfies \eqref{constraint_gamma0}. 
\end{proof}

\begin{remark}
\label{remark5}
    If we consider an initial set $\mathcal{X}_0\subseteq \mathcal{X}\setminus \mathcal{X}_r$, which includes  infinitely many initial states, rather than a fixed initial state $\bm{x}_0\in \mathcal{X}\setminus \mathcal{X}_r$, we cannot guarantee that there exist a constant $\gamma\in (0,1)$ and a function $v(\bm{x}):\mathbb{R}^n \rightarrow \mathbb{R}$, which is bounded over $\mathcal{X}$ and satisfies the condition \eqref{constraint_gamma0} with $v(\bm{x})\geq \epsilon_2, \forall \bm{x} \in \mathcal{X}_0$ replacing $v(\bm{x}_0)\geq \epsilon_2$, such that $\mathbb{P}_{\pi}(RA_{\bm{x}})\geq \epsilon_2, \forall \bm{x}\in \mathcal{X}_0$. This is because we cannot guarantee that $\lim_{\gamma\rightarrow 1^-} \tilde{V}_{\gamma}(\bm{x})=\mathbb{P}_{\pi}(RA_{\bm{x}})$ holds uniformly over $\mathcal{X}_0$. 
    
    In addition, condition \eqref{constraint_gamma0} is a typical instance of condition (13) with $\alpha>1$ and $\beta=0$ in Theorem 5 in \cite{xue2024finite}, which offers lower bounds of the reach-avoid probability in the context of finite-time reach-avoid verification.  $\blacksquare$
\end{remark}

\begin{remark}
We note here that we can also construct a necessary and sufficient  condition to certify upper bounds of the safety probability $\mathbb{P}_{\pi}(\forall k\in \mathbb{N}. \bm{\phi}^{\bm{x}_0}_{\pi}(k)\in \mathcal{X})$ such that the system \eqref{system} starting from the initial state $\bm{x}_0$ will stay within the safe set $\mathcal{X}$ for all time \cite{yu2023safe}, under the assumption that  $\mathbb{P}_{\pi}(\forall k\in \mathbb{N}. \bm{\phi}^{\bm{x}_0}_{\pi}(k)\in \mathcal{X})<1-\epsilon_1$. 
    Under the assumption that $\mathbb{P}_{\pi}(\forall k\in \mathbb{N}. \bm{\phi}^{\bm{x}_0}_{\pi}(k)\in \mathcal{X})<1-\epsilon_1$, there exist a constant $\gamma \in (0,1)$ and a function $v(\bm{x}):\mathbb{R}^n \rightarrow \mathbb{R}$, which is bounded over $\mathcal{X}$ and satisfies the following condition: 
\begin{equation}
\label{constraint_gamma}
    \begin{cases}
        v(\bm{x}_0)\geq \epsilon_1,\\
         v(\bm{x}) \leq \gamma \mathbb{E}_{\bm{\theta}}[v(\bm{f}(\bm{x},\bm{\theta}))], & \forall \bm{x}\in \mathcal{X},\\
         v(\bm{x})\leq 1, &\forall \bm{x}\in \mathbb{R}^n\setminus \mathcal{X}, 
    \end{cases}
\end{equation}
if and only if $\mathbb{P}_{\pi}(\forall k\in \mathbb{N}. \bm{\phi}^{\bm{x}_0}_{\pi}(k)\in \mathcal{X})\leq 1-\epsilon_1$ (or equivalently, $\mathbb{P}_{\pi}(\exists k\in \mathbb{N}. \bm{\phi}^{\bm{x}_0}_{\pi}\in \mathbb{R}^n\setminus \mathcal{X})\geq \epsilon_1$).  Condition \eqref{constraint_gamma} is also a typical instance of condition (6) with $\alpha>1$ and $\beta=0$ in Theorem 2 in \cite{xue2024finite}, which offers upper bounds of the safety probability in the finite-time safety verification. Such conditions for certifying upper bounds become particularly significant in scenarios where $\mathbb{R}^n \setminus \mathcal{X}$ represents the target set that the system aims to reach. In this context, the safety probability will be referred to as the liveness probability.
$\blacksquare$
\end{remark}

Based on the value function \eqref{value_gamma}, we are able to show the necessity of another sufficient barrier-like condition in \cite{xue2021reach} for the reach-avoid verification under Assumption \ref{strict_small}. The condition is presented below:
\begin{equation}
\label{w}
    \begin{cases}
        v(\bm{x}_0)\geq \epsilon_2,\\
        v(\bm{x})\leq  \mathbb{E}_{\bm{\theta}}[v(\bm{f}(\bm{x},\bm{\theta}))],& \forall \bm{x}\in \mathcal{X}\setminus \mathcal{X}_r,\\
        v(\bm{x})\leq \mathbb{E}_{\bm{\theta}}[w(\bm{f}(\bm{x},\bm{\theta}))]-w(\bm{x}), & \forall \bm{x}\in \mathcal{X}\setminus \mathcal{X}_r,\\
        v(\bm{x})\leq 1, & \forall \bm{x}\in \mathcal{X}_r,\\
        v(\bm{x})\leq 0, & \forall \bm{x}\in \Omega\setminus \mathcal{X},
    \end{cases}
\end{equation}
where $\Omega$ is a set in \eqref{omega}. If there exist a function $v(\cdot):\Omega\rightarrow \mathbb{R}$ and a bounded function $w(\cdot):\Omega\rightarrow \mathbb{R}$ satisfying \eqref{w}, $\mathbb{P}_{\pi}(RA_{\bm{x}_0})\geq \epsilon_2$ holds. This conclusion can be justified by following the proof of Corollary 2 in \cite{xue2021reach}. In the following, we just demonstrate its necessity. 
\begin{corollary}
If $\mathbb{P}_{\pi}(RA_{\bm{x}_0})>\epsilon_2$, then there exist a function $v(\cdot):\Omega\rightarrow \mathbb{R}$ and a bounded function $w(\cdot):\Omega\rightarrow \mathbb{R}$ satisfying \eqref{w}.
\end{corollary}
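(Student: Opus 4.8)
The plan is to reuse the discounted value function $\tilde{V}_{\gamma}$ from \eqref{value_gamma}, exactly as in the proof of the preceding theorem, and to exhibit the auxiliary function $w$ as a suitable scalar multiple of it. First I would invoke Assumption \ref{strict_small} together with Lemma \ref{lower_convergence}: since $\mathbb{P}_{\pi}(RA_{\bm{x}_0})>\epsilon_2$ and $\lim_{\gamma\rightarrow 1^-}\tilde{V}_{\gamma}(\bm{x}_0)=\mathbb{P}_{\pi}(RA_{\bm{x}_0})$, there is a constant $\gamma_0\in(0,1)$ with $\tilde{V}_{\gamma_0}(\bm{x}_0)\geq \epsilon_2$. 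I would then set $v:=\tilde{V}_{\gamma_0}$ restricted to $\Omega$ and propose $w:=\frac{\gamma_0}{1-\gamma_0}\,\tilde{V}_{\gamma_0}$ restricted to $\Omega$, and claim that this pair satisfies \eqref{w}.

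The main work is then verifying the five inequalities, and the only nontrivial one is the drift condition $v(\bm{x})\leq \mathbb{E}_{\bm{\theta}}[w(\bm{f}(\bm{x},\bm{\theta}))]-w(\bm{x})$ on $\mathcal{X}\setminus\mathcal{X}_r$. Here I would use the Bellman equation \eqref{bellman_gamma}, which on $\mathcal{X}\setminus\mathcal{X}_r$ reads $\tilde{V}_{\gamma_0}(\bm{x})=\gamma_0\,\mathbb{E}_{\bm{\theta}}[\tilde{V}_{\gamma_0}(\bm{f}(\bm{x},\bm{\theta}))]$, hence $\mathbb{E}_{\bm{\theta}}[\tilde{V}_{\gamma_0}(\bm{f}(\bm{x},\bm{\theta}))]=\tilde{V}_{\gamma_0}(\bm{x})/\gamma_0$. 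Substituting this into $\mathbb{E}_{\bm{\theta}}[w(\bm{f}(\bm{x},\bm{\theta}))]-w(\bm{x})=\frac{\gamma_0}{1-\gamma_0}\big(\mathbb{E}_{\bm{\theta}}[\tilde{V}_{\gamma_0}(\bm{f}(\bm{x},\bm{\theta}))]-\tilde{V}_{\gamma_0}(\bm{x})\big)$ yields exactly $\tilde{V}_{\gamma_0}(\bm{x})=v(\bm{x})$, so the inequality holds with equality. The scalar $\frac{\gamma_0}{1-\gamma_0}$ is chosen precisely so that the one-step expected increment of $w$ equals $v$ on $\mathcal{X}\setminus\mathcal{X}_r$; this is the crux of the construction, since the discount factor that was an obstacle in Subsection \ref{rv_1} is exactly what makes the increment positive and $w$ bounded here.

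The remaining conditions follow directly from the structure of $\tilde{V}_{\gamma_0}$. The bound $v(\bm{x}_0)\geq \epsilon_2$ is the choice of $\gamma_0$. The submartingale inequality $v(\bm{x})\leq\mathbb{E}_{\bm{\theta}}[v(\bm{f}(\bm{x},\bm{\theta}))]$ on $\mathcal{X}\setminus\mathcal{X}_r$ follows from $v=\gamma_0\mathbb{E}_{\bm{\theta}}[v(\bm{f}(\bm{x},\bm{\theta}))]$ together with $v=\tilde{V}_{\gamma_0}\geq 0$ and $\gamma_0<1$, exactly as in the preceding theorem. The boundary values $v(\bm{x})=1$ on $\mathcal{X}_r$ and $v(\bm{x})=0$ on $\mathbb{R}^n\setminus\mathcal{X}\supseteq\Omega\setminus\mathcal{X}$ come from the Bellman equation \eqref{bellman_gamma} and give $v\leq 1$ and $v\leq 0$ there. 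Finally, since $0\leq\tilde{V}_{\gamma_0}(\bm{x})\leq \mathbb{P}_{\pi}(RA_{\bm{x}})\leq 1$ by Lemma \ref{lower_convergence}, both $v$ and $w=\frac{\gamma_0}{1-\gamma_0}\tilde{V}_{\gamma_0}$ are bounded on $\Omega$, so $w$ is an admissible bounded auxiliary function. I do not anticipate a genuine obstacle; the only subtlety is spotting the correct multiplier $\frac{\gamma_0}{1-\gamma_0}$ for $w$, after which every condition reduces to an identity or a one-line consequence of nonnegativity and the boundary values of $\tilde{V}_{\gamma_0}$.
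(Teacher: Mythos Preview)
Your proposal is correct and follows essentially the same approach as the paper: choose $\gamma_0\in(0,1)$ via Lemma~\ref{lower_convergence}, take $v=\tilde{V}_{\gamma_0}$, and let $w$ be a scalar multiple of $\tilde{V}_{\gamma_0}$. The only cosmetic difference is that the paper allows any multiplier $\gamma_1$ with $\frac{\gamma_1}{1+\gamma_1}\geq\gamma_0$ and verifies the drift condition as an inequality, whereas you pick the boundary value $\gamma_1=\frac{\gamma_0}{1-\gamma_0}$ (which satisfies $\frac{\gamma_1}{1+\gamma_1}=\gamma_0$) and obtain the drift condition as an identity; this is arguably a cleaner presentation of the same construction.
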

\begin{proof}
   According to Lemma \ref{lower_convergence}, there exists $\gamma_0\in (0,1)$ such that $\tilde{V}_{\gamma_0}(\bm{x}_0)\geq \epsilon_2$ holds. From \eqref{bellman_gamma}, we can obtain 
   \[
   \begin{cases}
    1\geq \tilde{V}_{\gamma_0}(\bm{x})\geq 0,& \forall \bm{x}\in \mathbb{R}^n,\\
   \tilde{V}_{\gamma_0}(\bm{x})=\gamma_0 \mathbb{E}_{\bm{\theta}}[\tilde{V}_{\gamma_0}(\bm{f}(\bm{x},\bm{\theta}))]\leq \mathbb{E}_{\bm{\theta}}[\tilde{V}_{\gamma_0}(\bm{f}(\bm{x},\bm{\theta}))], & \forall \bm{x}\in \mathcal{X}\setminus \mathcal{X}_r,\\
   \tilde{V}_{\gamma_0}(\bm{x})\leq 1, & \forall \bm{x}\in \mathcal{X}_r,\\
   \tilde{V}_{\gamma_0}(\bm{x})=0,  & \forall \bm{x}\in \Omega\setminus \mathcal{X}.
   \end{cases}
   \]
   Let $\gamma_1$ be a constant satisfying $\frac{\gamma_1}{1+\gamma_1}\geq \gamma_0$, and $w(\bm{x}):=\gamma_1 \tilde{V}_{\gamma_0}(\bm{x})$ for $\bm{x}\in \mathbb{R}^n$. Thus,  
   \begin{equation*}
   \begin{split}
   &\frac{\mathbb{E}_{\bm{\theta}}[w(\bm{f}(\bm{x},\bm{\theta}))]-w(\bm{x})-\tilde{V}_{\gamma_0}(\bm{x})}{1+\gamma_1}\\
   =&\frac{\gamma_1 \mathbb{E}_{\bm{\theta}}[\tilde{V}_{\gamma_0}(\bm{f}(\bm{x},\bm{\theta}))]-\gamma_1\tilde{V}_{\gamma_0}(\bm{x})-\tilde{V}_{\gamma_0}(\bm{x})}{1+\gamma_1}\\
   =&\frac{\gamma_1}{1+\gamma_1} \mathbb{E}_{\bm{\theta}}[\tilde{V}_{\gamma_0}(\bm{f}(\bm{x},\bm{\theta}))]-\tilde{V}_{\gamma_0}(\bm{x})\\
   \geq &\gamma_0 \mathbb{E}_{\bm{\theta}}[\tilde{V}_{\gamma_0}(\bm{f}(\bm{x},\bm{\theta}))]-\tilde{V}_{\gamma_0}(\bm{x})=0.
   \end{split}
   \end{equation*}
   Thus, the functions $\tilde{V}_{\gamma_0}(\bm{x})$ and $w(\bm{x}):=\gamma_1 \tilde{V}_{\gamma_0}(\bm{x})$ satisfy \eqref{w}. Consequently,  there exist a function $v(\cdot):\Omega\rightarrow \mathbb{R}$ and a bounded function $w(\cdot):\Omega\rightarrow \mathbb{R}$ satisfying \eqref{w}.  
   \end{proof}

\section{Examples}
\label{sec:ex}
In this section, we demonstrate the application of our theoretical developments through two examples. In both cases, the function $f(x,\theta)$ is a polynomial in the state variables $x$, and the safe set $\mathcal{X}$ as well as the target set $\mathcal{X}_r$ are semi-algebraic sets. We aim to search for polynomial barrier-like functions to solve the associated verification problem. To do this, we encode the constraints \eqref{constraints1_safe}, \eqref{constraints11}, and \eqref{constraint_gamma0} as semi-definite programs (SDPs) using the sum of squares (SOS) decomposition for multivariate polynomials. The resulting SDPs are then solved  the tool Mosek 10.1.21 \cite{aps2019mosek}. To ensure numerical stability during the solution of these SDPs, we impose a constraint on the coefficients of the unknown polynomials, specifically restricting them to the interval $[-100, 100]$. In the sequel, $\sum[x]$ denotes the set of sum-of-squares polynomials over variables $x$, i.e., 
$\sum[x]=\{p\in \mathbb{R}[x]\mid p=\sum_{i=1}^k q^2_i(x), q_i(x)\in \mathbb{R}[x],i=1,\ldots,k\}$, where $\mathbb{R}[x]$ denotes the ring of polynomials in variables $x$.

\begin{example}[Safety Verification]
\label{ex1}

Consider the one-dimensional discrete-time system:
\begin{equation}
x(l+1)=(-0.5+\theta(l))\,x(l),
\end{equation}
where $\theta(l)\in\Theta=[-1,1]$ is uniform, the safe set is $\mathcal{X}=\{x \mid h(x)\leq 0\}$ with $h(x)=x^2-1$, and the initial state is $x_0=-0.8$. We simulate $10^4$ trajectories over $10^4$ time steps. The estimated safety probability is $0.8286$. Figure~\ref{fig:ex1_traj} shows three example trajectories over 10 time steps.
 
\begin{figure}[htbp]
    \centering
    \includegraphics[width=0.45\textwidth]{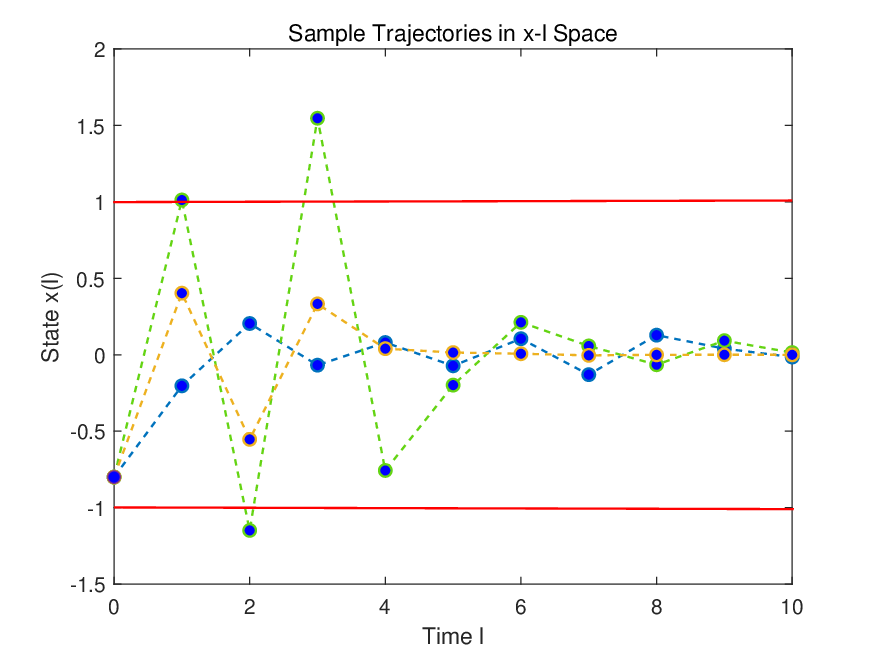} 
    \caption{Trajectories of system in Example~\ref{ex1} with $x_0=-0.8$.The region enclosed by the red curve represents the safe set $\mathcal{X}$, and the blue points correspond to the system states visited over 10 time steps.}
    \label{fig:ex1_traj}
\end{figure}

\textbf{SDP formulation:}  
We solve the safety verification problem with $\epsilon_1 = 0.65$ and $\epsilon_1 = 0.75$, as defined in Definition~\ref{safe}, via solving the constraint \eqref{constraints1_safe}. The corresponding SDP over unknown polynomials $(v(x), s_0(x), s_1(x))$ is:
\begin{equation*}
\begin{cases}
1-\epsilon_1-v(x_0)\geq 0,\\
v(x)-\mathbb{E}_{\theta}[v(f(x,\theta))]+s_0(x) h(x) \in \sum[x],\\
v(x)-1-s_1(x)h(x)\in \sum[x],\\
v(x)\in \sum[x],\quad s_0(x),s_1(x)\in \sum[x].
\end{cases}
\end{equation*}

\textbf{SDP feasibility vs polynomial degree on ($v(x)$,$s_0(x)$, $s_1(x)$):}  
Table~\ref{tab:sdp_ex1} summarizes which degrees yield feasible SDPs.

\begin{table}[ht]
\caption{\centering SDP feasibility for Example~\ref{ex1} with $x_0=-0.8$\\
(\ding{52}: feasible; 
\ding{55}: infeasible)}
\label{tab:sdp_ex1}
\centering
\begin{tabular}{|c|c|c|}
\hline
Degree & $\epsilon_1=0.65$ & $\epsilon_1=0.75$ \\
\hline
6 & \ding{52} & \ding{55} \\
18 & \ding{52} & \ding{52} \\
\hline
\end{tabular}
\end{table}

\textbf{Extension to initial set (as indicated in Remark \ref{remark1}):}  
For initial set $\mathcal{X}_0=\{x\mid h_0(x)\le 0\}$ with $h_0(x)=(x-0.8)^2-0.01$,  the resulting SDP over unknown polynomials $(v(x),s_i(x),i=0,\ldots,2)$ is 
 \begin{equation*}
\begin{cases}
  1-\epsilon_1-v(x)+s_0(x)h_0(x)\in \sum[x], \\
       v(x)-\mathbb{E}_{\theta}[v(f(x,\theta))]+s_1(x) h(x) \in \sum[x],\\
        v(x)-1-s_2(x)h(x)\in \sum[x],\\
       v(x)\in \sum[x],s_0(x)\in \sum[x],\\
       s_1(x)\in \sum[x],s_2(x)\in \sum[x].
       \end{cases}
\end{equation*}
It is feasible for degree 18 but infeasible for degree 16, as reported in Table \ref{tab:sdp_ex1_set}.  We also compute an empirical estimate of the safety probability using a Monte Carlo method. Specifically, we draw $10^3$ initial states independently from the initial set $\mathcal{X}_0$ according to a uniform distribution, and for each initial state, we simulate $10^4$ trajectories over $10^4$ time steps. This procedure yields an estimated safety probability of $0.7521$.
\begin{table}[ht]
\caption{\centering SDP feasibility for Example~\ref{ex1} with the initial set $\mathcal{X}_0$\\
(\ding{52}: feasible; 
\ding{55}: infeasible)}
\label{tab:sdp_ex1_set}
\centering
\begin{tabular}{|c|c|c|c|}
\hline
Degree & $\epsilon_1=0.60$ &$\epsilon_1=0.65$\\
\hline
14 & \ding{52}& \ding{55}\\
22 & \ding{52}& \ding{52}\\
\hline
\end{tabular}
\end{table}

\end{example}

\begin{example}[Reach-Avoid Verification]
\label{ex2}
Consider the same system as Example~\ref{ex1}. The safe set is $\mathcal{X}=\{x\mid h(x)\leq 0\}$ with $h(x)=x^2-1$, the target set is $\mathcal{X}_r=\{x\mid g(x)\leq 0\}$ with $g(x)=x^2-0.01$, and the initial state is $x_0=-0.8$. We simulate $10^4$ trajectories over $10^4$ time steps. The estimated reach-avoid probability is $0.8240$. Figure~\ref{fig:ex2_traj} shows three example trajectories over 10 time steps.

\begin{figure}[htbp]
    \centering
    \includegraphics[width=0.45\textwidth]{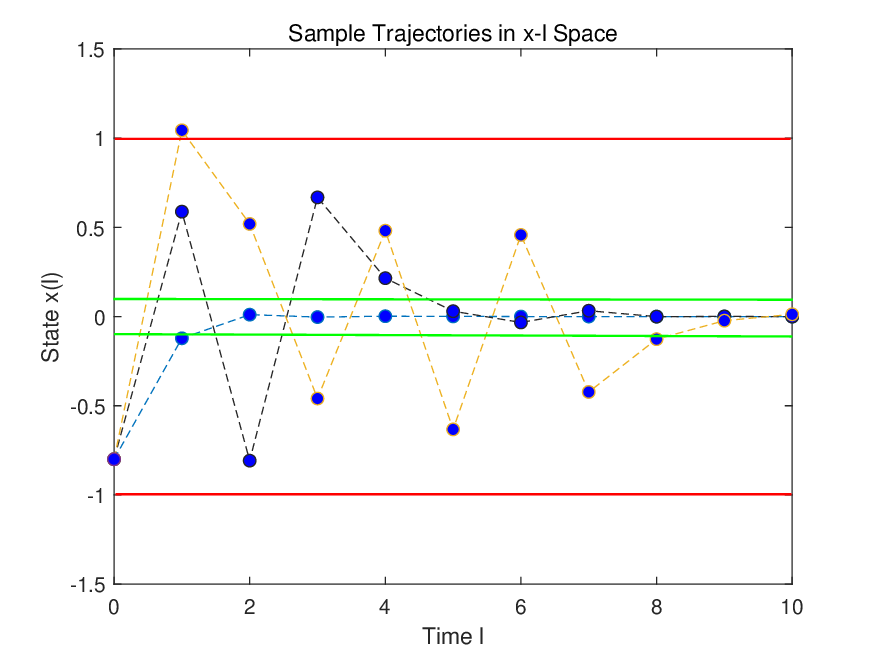} 
    \caption{Trajectories of the system in Example~\ref{ex2} starting from $x_0=-0.8$. The regions enclosed by the red and green curves represent the safe set $\mathcal{X}$ and the target set $\mathcal{X}_r$, respectively, while the blue points indicate the system states visited over 10 time steps.}
    \label{fig:ex2_traj}
\end{figure}

\textbf{SDP formulation:} We solve the reach-avoid verification problem with $\epsilon_2 = 0.65$ and $\epsilon_2 = 0.75$, as defined in Definition~\ref{reach-avoid}, by solving the constraints \eqref{constraints11} and \eqref{constraint_gamma0}. 
As proven in Proposition \ref{almost} (see Appendix), for any $x\in \mathcal{X}\setminus \mathcal{X}_r$, the trajectory will leave this set in finite time with probability one. Therefore, Assumption \ref{assump1} is satisfied.
Therefore, we can address the reach-avoid verification problem by solving constraint \eqref{constraints11}. The key difference between constraints \eqref{constraints11} and \eqref{constraint_gamma0} lies in the treatment of the term $\mathbb{E}_{\theta}[v(f(x,\theta))]$: in \eqref{constraints11}, this term is multiplied by $1$, whereas in \eqref{constraint_gamma0}, it is scaled by a discount factor $\gamma \in (0,1)$. As a result, their corresponding SDPs—formulated over the unknown polynomials ($v(x)$, $s_i(x)$, $i=0,\ldots,3$)—can be expressed in a unified form by treating $\gamma$ as a tunable parameter, as shown below.
Specifically, setting $\gamma=1$ recovers constraint \eqref{constraints11}, while choosing $\gamma \in (0,1)$ corresponds to constraint \eqref{constraint_gamma0}.
 \begin{equation*}
\begin{cases}
  v(x_0)-\epsilon_2\geq 0, \\
       \gamma \mathbb{E}_{\theta}[v(f(x,\theta))]-v(x)+s_0(x) h(x)-s_1(x) g(x) \in \sum[x],\\
        1-v(x)+s_2(x)g(x)\in \sum[x],\\
       -v(x)-s_3(x)h(x)\in \sum[x],\\
       s_0(x)\in \sum[x],s_1(x)\in \sum[x],\\
       s_2(x)\in \sum[x],s_3(x)\in \sum[x].
       \end{cases}
\end{equation*}

\textbf{SDP feasibility vs polynomial degree  ($v(x)$, $s_i(x)$, $i=0,\ldots,3$) and $\gamma$:}  
Table~\ref{tab:sdp_ex2} summarizes feasibility results.

\begin{table}[htbp]
\centering
\caption{\centering SDP feasibility for Example~\ref{ex2} with $\bm{x}_0=-0.8$\\
(\ding{52}: feasible; 
\ding{55}: infeasible)}
\label{tab:sdp_ex2}
\begin{tabular}{|c|c|c|c|c|}
\hline
Degree & $\epsilon_2$& $\gamma=1$ & $\gamma=0.999$ & $\gamma=0.99$ \\
\hline
6 & 0.65&  \ding{52} & \ding{52} & \ding{55} \\
6 & 0.75&  \ding{55} & \ding{55} & \ding{55} \\
18 & 0.65&  \ding{52} & \ding{52} & \ding{52} \\
18 & 0.75&  \ding{52} & \ding{52} & \ding{55} \\
\hline
\end{tabular}
\end{table}

\textbf{Extension to initial set (as indicated in Remark \ref{remark4}):}  
For $\mathcal{X}_0=\{x\mid h_0(x)\le 0\}$ with $h_0(x)=(x-0.8)^2-0.01$, the resulting SDP over unknown polynomials $(v(x),s_i(x),i=0,\ldots,4)$ is 
 \begin{equation*}
\begin{cases}
   v(x)-\epsilon_2+s_0(x)h_0(x)\in \sum[x], \\
       \gamma \mathbb{E}_{\theta}[v(f(x,\theta))]-v(x)+s_1(x) h(x)-s_2(x) g(x) \in \sum[x],\\
        1-v(x)+s_3(x)g(x)\in \sum[x],\\
       -v(x)-s_4(x)h(x)\in \sum[x],\\
       s_0(x)\in \sum[x],s_1(x)\in \sum[x],\\
       s_2(x)\in \sum[x],s_3(x)\in \sum[x], s_4(x)\in \sum[x].
       \end{cases}
\end{equation*}
Its feasibility summarized in Table \ref{tab:sdp_ex2_set}. We also compute an empirical estimate of the reach-avoid probability using a Monte Carlo method. Specifically, we draw $10^3$ initial states independently from the initial set $\mathcal{X}_0$ according to a uniform distribution, and for each initial state, we simulate $10^4$ trajectories over $10^4$ time steps. This procedure yields an estimated reach-avoid probability of $0.7510$.

\begin{table}[htbp]
\centering
\caption{\centering SDP feasibility for Example~\ref{ex2} with the initial set $\mathcal{X}_0$\\
(\ding{52}: feasible; 
\ding{55}: infeasible)}
\label{tab:sdp_ex2_set}
\begin{tabular}{|c|c|c|c|c|}
\hline
Degree & $\epsilon_2$ & $\gamma=1$ & $\gamma=0.999$ & $\gamma=0.99$ \\
\hline
20 & 0.65& \ding{55} & \ding{55} & \ding{55} \\
22 & 0.65& \ding{52} & \ding{52} & \ding{55} \\
\hline
\end{tabular}
\end{table}

\end{example}

The above examples demonstrate how the choice of polynomial degree and the factor $\gamma$ affect the feasibility of the safety and reach–avoid verification problems. Increasing the polynomial degree enhances the representational capacity of the barrier-like functions, which benefits both safety and reach–avoid verification by enabling the SDP to satisfy the associated conditions for larger tolerance parameters $\epsilon_1$ and $\epsilon_2$. In contrast, the discount factor $\gamma$, which is specific to the reach-avoid verification, influences the trade-off between conservatism and feasibility: values of $\gamma$ closer to 1 generally make the SDP more likely to be feasible under less conservative conditions.

\color{black}

\oomit{\begin{example}[Safety Verification]
\label{ex1}
Consider a one-dimensional discrete-time system:
\begin{equation}
\label{volterra}
x(l+1)=(-0.5+\theta(l))x(l),
\end{equation}
where $\theta(\cdot)\colon\mathbb{N}\rightarrow \Theta=[-1,1]$,  the safe set $\mathcal{X}=\{\,(x,y)^{\top}\mid h(x)\leq 0\,\}$ with $h(x)=x^2-1$, and the initial state $x_0=-0.8$. Besides, we assume that the probability distribution imposed on $\Theta$ is the uniform distribution.  We compute an empirical estimate of the safety probability using a Monte Carlo method by simulating $10^4$ trajectories over $10^4$ time steps, which yields an estimated safety probability of $0.8286$.

\begin{figure}[htbp]
    \centering
    \includegraphics[width=0.4\textwidth]{automata_ex1.eps}
    \caption{Illustration of three trajectories within 10 time steps of Example \ref{ex1} starting from $-0.8$ (The red curve outlines the safe set over the horizon $[0,10]$, and the blue points correspond to the system states visited during this period).}
    \label{fig:ex1}
\end{figure}

In this example, we solve the safety verification problem with $\epsilon_1 = 0.65$ and $\epsilon_1 = 0.75$ as defined in Definition \ref{safe} by solving the constraint \eqref{constraints1_safe}. The resulting SDP over unknown polynomials $(v(x),s_0(x),s_1(x))$ is 
 \begin{equation*}
\begin{cases}
  1-\epsilon_1-v(x_0)\geq 0, \\
       v(x)-\mathbb{E}_{\theta}[v(f(x,\theta))]+s_0(x) h(x) \in \sum[x],\\
        v(x)-1-s_1(x)h(x)\in \sum[x],\\
       v(x)\in \sum[x],\\
       s_0(x)\in \sum[x],s_1(x)\in \sum[x].
       \end{cases}
\end{equation*}

When the degree of polynomials $v(x)$, $s_0(x)$, and $s_1(x)$ is set to 6, the resulting SDP is feasible, and the safety verification problem with $\epsilon_1 = 0.65$ is successfully solved. In contrast, when the degree of polynomials $v(x)$, $s_0(x)$, and $s_1(x)$ is set to 18, the resulting SDP is feasible, and the safety verification problem with $\epsilon_1 = 0.75$ is successfully solved.

Besides, we consider the safety verification scenario in Remark \ref{remark1}.  The single initial state $x_0=-0.8$ is replaced with the initial set $\mathcal{X}_0=\{x\mid h_0(x)\leq 0\}$ with $h_0(x)=(x-0.8)^2-0.01$. When $\epsilon_1=0.62$, the resulting SDP over unknown polynomials $(v(x),s_i(x),i=0,\ldots,2)$ is 
 \begin{equation*}
\begin{cases}
  1-\epsilon_1-v(x)+s_0(x)h_0(x)\in \sum[x], \\
       v(x)-\mathbb{E}_{\theta}[v(f(x,\theta))]+s_0(x) h(x) \in \sum[x],\\
        v(x)-1-s_1(x)h(x)\in \sum[x],\\
       v(x)\in \sum[x],s_0(x)\in \sum[x],\\
       s_1(x)\in \sum[x],s_2(x)\in \sum[x].
       \end{cases}
\end{equation*}

When the degree of polynomials $v(x)$ and $s_i(x)$, $i=0,\ldots,2$, is set to 18, the resulting SDP is feasible, and the safety verification problem with $\epsilon_1 = 0.62$ is successfully solved. However, decreasing the degree to 16 leads to infeasibility. We also compute an empirical estimate of the safety probability using a Monte Carlo method. Specifically, we draw $10^3$ initial states independently from the initial set $\mathcal{X}_0$ according to a uniform distribution, and for each initial state, we simulate $10^4$ trajectories over $10^4$ time steps. This procedure yields an estimated safety probability of $0.7521$.

\end{example}

\begin{example}[Reach-avoid Verification]
\label{ex2}

Consider the system in Example \ref{ex1} again, where the safe set $\mathcal{X}=\{ x\mid h(x)\leq 0\,\}$ with $h(x)=x^2-1$, the target set $\mathcal{X}_r=\{\,x\mid g(x)\leq 0\,\}$ with $g(x)=x^2-0.01$, and the initial state $x_0=-0.8$. We compute an empirical estimate of the reach-avoid probability using a Monte Carlo method by simulating $10^4$ trajectories over $10^4$ time steps, which yields an estimated reach-avoid probability of $0.8240$.

In this example, we solve the reach-avoid verification problem with  $\epsilon_2 = 0.65$ and $\epsilon_2 = 0.75$, as defined in Definition~\ref{reach-avoid}, by solving the constraints \eqref{constraints11} and \eqref{constraint_gamma0}. We can show that starting from any $x\in \mathcal{X}\setminus \mathcal{X}_r$, the trajectory will leave $\mathcal{X}\setminus \mathcal{X}_r$ in finite time with probability 
1. Please refer to Proposition \ref{almost} in Appendix. Thus, Assumption~\ref{assump1} is satisfied.
Therefore, we can address the reach-avoid verification problem by solving constraint \eqref{constraints11}. The key difference between constraints \eqref{constraints11} and \eqref{constraint_gamma0} lies in the treatment of the term $\mathbb{E}_{\theta}[v(f(x,\theta))]$: in \eqref{constraints11}, this term is multiplied by $1$, whereas in \eqref{constraint_gamma0}, it is scaled by a discount factor $\gamma \in (0,1)$. As a result, their corresponding SDPs—formulated over the unknown polynomials $(v(x), s_0(x), s_1(x), s_2(x), s_3(x))$—can be expressed in a unified form by treating $\gamma$ as a tunable parameter, as shown below.
Specifically, setting $\gamma=1$ recovers constraint \eqref{constraints11}, while choosing $\gamma \in (0,1)$ corresponds to constraint \eqref{constraint_gamma0}.
 \begin{equation*}
\begin{cases}
  v(x_0)-\epsilon_2\geq 0, \\
       \gamma \mathbb{E}_{\theta}[v(f(x,\theta))]-v(x)+s_0(x) h(x)-s_1(x) g(x) \in \sum[x],\\
        1-v(x)+s_2(x)g(x)\in \sum[x],\\
       -v(x)-s_3(x)h(x)\in \sum[x],\\
       s_0(x)\in \sum[x],s_1(x)\in \sum[x],\\
       s_2(x)\in \sum[x],s_3(x)\in \sum[x].
       \end{cases}
\end{equation*}

when the degree of polynomials $v(x)$, $s_i(x)$, $i=0,\ldots,3$, is set to 6, the resulting SDP with both $\gamma = 0.999$ and $\gamma=1$ is feasible, and the reach-avoid verification problem with $\epsilon_2 = 0.65$ is successfully solved. However, when $\gamma$ is set to 0.99, it cannot be solved. In contrast, when the degree of polynomials$v(x)$, $s_i(x)$, $i=0,\ldots,3$, is set to 18, the resulting SDP with both $\gamma = 0.999$ and $\gamma=1$ is feasible, and the reach-avoid verification problem with $\epsilon_2 = 0.75$ is successfully solved. However, when $\gamma$ is set to 0.99, it cannot be solved.

Besides, we consider the reach-avoid verification scenario in Remark \ref{remark4}.  The single initial state $x_0=-0.8$ is replaced with the initial set $\mathcal{X}_0=\{x\mid h_0(x)\leq 0\}$ with $h_0(x)=(x-0.8)^2-0.01$. When $\epsilon_1=0.65$, the resulting SDP over unknown polynomials $(v(x),s_i(x),i=1,0,\ldots,4)$ is 
 \begin{equation*}
\begin{cases}
   v(x)-\epsilon_2+s_0(x)h_0(x)\in \sum[x], \\
       \gamma \mathbb{E}_{\theta}[v(f(x,\theta))]-v(x)+s_1(x) h(x)-s_2(x) g(x) \in \sum[x],\\
        1-v(x)+s_3(x)g(x)\in \sum[x],\\
       -v(x)-s_4(x)h(x)\in \sum[x],\\
       s_0(x)\in \sum[x],s_1(x)\in \sum[x],\\
       s_2(x)\in \sum[x],s_3(x)\in \sum[x],\\
       s_4(x)\in \sum[x].
       \end{cases}
\end{equation*}

When the degree of polynomials $v(x)$ and $s_i(x)$, $i=0,\ldots,4$, is set to 22, the resulting SDP with both $\gamma = 0.999$ and $\gamma=1$ is feasible, and the reach-avoid verification problem with $\epsilon_1 = 0.65$ is successfully solved. However, decreasing $\gamma$ to $0.99$ or the degree to 20 leads to infeasibility. We also compute an empirical estimate of the reach-avoid probability using a Monte Carlo method. Specifically, we draw $10^3$ initial states independently from the initial set $\mathcal{X}_0$ according to a uniform distribution, and for each initial state, we simulate $10^4$ trajectories over $10^4$ time steps. This procedure yields an estimated reach-avoid probability of $0.7510$.

\end{example}
}

\oomit{\begin{example}
\label{ex1}
Consider the Lotka-Volterra model:
\begin{equation}
\label{volterra}
    \begin{cases}
x(l+1)=rx(l)-ay(l)x(l),\\
y(l+1)=sy(l)+acy(l)x(l),
\end{cases}
\end{equation}
where $r=0.5$, $a=1$, $s=-0.5+\theta(l)$ with $\theta(\cdot)\colon\mathbb{N}\rightarrow \Theta=[-1,1]$ and $c=1$,  the safe set $\mathcal{X}=\{\,(x,y)^{\top}\mid h(x)\leq 0\,\}$ with $h(x)=x^2+y^2-1$, and the initial state $x_0=(-0.6,-0.5)^{\top}$. Besides, we assume that the probability distribution imposed on $\Theta$ is the uniform distribution.  We compute an empirical estimate of the safety probability using a Monte Carlo method by simulating $10^4$ trajectories over $10^4$ time steps, which yields an estimated safety probability of $0.6411$.

In this example, we solve the safety verification problem with $\epsilon_1 = 0.5$ as defined in Definition \ref{safe} by solving the constraint \eqref{constraints1_safe}. The resulting SDP over unknown polynomials $(v(x),s_0(x),s_1(x))$ is 
 \begin{equation*}
\begin{cases}
  1-\epsilon_1-v(x_0)\geq 0, \\
       v(x)-\mathbb{E}_{\theta}[v(f(x,\theta))]+s_0(x) h(x) \in \sum[x],\\
        v(x)-1-s_1(x)h(x)\in \sum[x],\\
       v(x)\in \sum[x],\\
       s_0(x)\in \sum[x],s_1(x)\in \sum[x].
       \end{cases}
\end{equation*}

When the degree of polynomials $v(x)$, $s_0(x)$, and $s_1(x)$ is set to 12, the resulting SDP is feasible, and the safety verification problem with $\epsilon_1 = 0.5$ is successfully solved.
\end{example}
}

\section{Conclusion}
\label{sec:con}
In this paper, we demonstrated necessary and sufficient barrier-like conditions for safety and reach-avoid verification of stochastic discrete-time systems over the infinite-time horizon. These conditions were constructed via relaxing Bellman equations. 

As indicated in Remark~\ref{remark5}, extending the result of Theorem~\ref{thm3} from a singleton initial state ${\bm{x}_0}$ to a general initial set $\mathcal{X}_0$ would require additional assumptions, such as uniform convergence properties. This extension will be investigated in future work.
Furthermore, we will develop efficient numerical methods to address the proposed barrier-like constraints for safety and reach–avoid verification of general nonlinear discrete-time stochastic systems. In addition, while infinite-time safety and reach-avoid verification methods provide rigorous guarantees for indefinite operational durations, they often impose stringent requirements that can be overly conservative, particularly in systems subject to stochastic disturbances such as additive Gaussian noise. In contrast, finite-time verification is more aligned with practical applications, where systems typically operate within bounded time horizons. Thus, finite-time verification presents a more practical approach for these systems. We will explore the necessary and sufficient barrier-like conditions for finite-time safety and reach-avoid verification in stochastic discrete-time systems.
\bibliographystyle{abbrv}
\bibliography{reference}
\section{Appendix}

\textbf{The proof of Proposition \ref{upper_condition_ra}}:

\begin{proof}
1) We first prove the ``only if'' part.

We prove via induction that for all $k\in \mathbb{N}$,
\[
\begin{split}
\zeta_k(\bm{x}) := &\mathbb{E}_{\pi}\left[ \sum_{i=0}^{k} \prod_{j=0}^{i-1} 1_{\mathcal{X}\setminus \mathcal{X}_r}(\bm{\phi}^{\bm{x}}_{\pi}(j)) \cdot 1_{\mathcal{X}_r}(\bm{\phi}^{\bm{x}}_{\pi}(i)) \right] + \mathbb{E}_{\pi}\left[ \prod_{j=0}^{k} 1_{\mathcal{X}\setminus \mathcal{X}_r}(\bm{\phi}^{\bm{x}}_{\pi}(j)) \cdot v(\bm{\phi}^{\bm{x}}_{\pi}(k+1)) \right]\\
\leq &v(\bm{x}).
\end{split}
\]

\textbf{Base Case ($k=0$):}
\[
\begin{split}
\zeta_0(\bm{x}) &= \mathbb{E}_{\pi}\left[1_{\mathcal{X}_r}(\bm{\phi}^{\bm{x}}_{\pi}(0))\right] + \mathbb{E}_{\pi}\left[1_{\mathcal{X}\setminus \mathcal{X}_r}(\bm{\phi}^{\bm{x}}_{\pi}(0)) v(\bm{\phi}^{\bm{x}}_{\pi}(1))\right] \\
&= 1_{\mathcal{X}_r}(\bm{x}) + 1_{\mathcal{X}\setminus \mathcal{X}_r}(\bm{x}) \mathbb{E}_{\bm{\theta}}[v(\bm{f}(\bm{x},\bm{\theta}))] \leq v(\bm{x}),
\end{split}
\]
where the first equality follows from the convention that the empty product equals 1, and the inequality follows from condition \eqref{constraints1}.

\textbf{Inductive Step:} Assume $v(\bm{x}) \geq \zeta_k(\bm{x})$ for some $k \geq 0$. Then:
\begin{align*}
\zeta_{k+1}(\bm{x}) 
&= \zeta_k(\bm{x}) 
- \mathbb{E}_{\pi} \Bigg[ 
\prod_{j=0}^{k} 1_{\mathcal{X}\setminus \mathcal{X}_r}\big(\bm{\phi}^{\bm{x}}_{\pi}(j)\big) \,
v\big(\bm{\phi}^{\bm{x}}_{\pi}(k+1)\big) 
\Bigg] \\
&+ \mathbb{E}_{\pi} \Bigg[ 
\prod_{j=0}^{k} 1_{\mathcal{X}\setminus \mathcal{X}_r}\big(\bm{\phi}^{\bm{x}}_{\pi}(j)\big) \Bigg( 
1_{\mathcal{X}_r}\big(\bm{\phi}^{\bm{x}}_{\pi}(k+1)\big) + 1_{\mathcal{X}\setminus \mathcal{X}_r}\big(\bm{\phi}^{\bm{x}}_{\pi}(k+1)\big) \,
\mathbb{E}_{\bm{\theta}} \bigg[
v\big(\bm{\phi}^{\bm{x}}_{\pi}(k+2)\big)
\bigg] \Bigg) 
\Bigg].
\end{align*}

Using condition \eqref{constraints1} at state $\bm{\phi}^{\bm{x}}_{\pi}(k+1)$:
\[
\begin{split}
v(\bm{\phi}^{\bm{x}}_{\pi}(k+1)) \geq &1_{\mathcal{X}_r}(\bm{\phi}^{\bm{x}}_{\pi}(k+1)) + 1_{\mathcal{X}\setminus \mathcal{X}_r}(\bm{\phi}^{\bm{x}}_{\pi}(k+1)) \mathbb{E}_{\bm{\theta}}[v(\bm{\phi}^{\bm{x}}_{\pi}(k+2))],
\end{split}
\]
we have $\zeta_{k+1}(\bm{x}) \leq \zeta_k(\bm{x}) \leq v(\bm{x})$.

By induction, $v(\bm{x}) \geq \zeta_k(\bm{x})$ for all $k \in \mathbb{N}$.  Since $\zeta_k(\bm{x})\geq 0$ for all $k \in \mathbb{N}$, $\lim_{k\to\infty} \zeta_k(\bm{x})$ exists. Taking $k \to \infty$, we have
\[
\begin{split}
\lim_{k\to\infty} \zeta_k(\bm{x}) &= \mathbb{E}_{\pi}\left[ \sum_{i=0}^{\infty} \prod_{j=0}^{i-1} 1_{\mathcal{X}\setminus \mathcal{X}_r}(\bm{\phi}^{\bm{x}}_{\pi}(j)) \cdot 1_{\mathcal{X}_r}(\bm{\phi}^{\bm{x}}_{\pi}(i)) \right] + \lim_{k\to\infty} \mathbb{E}_{\pi}\left[ \prod_{j=0}^{k} 1_{\mathcal{X}\setminus \mathcal{X}_r}(\bm{\phi}^{\bm{x}}_{\pi}(j)) v(\bm{\phi}^{\bm{x}}_{\pi}(k+1)) \right]\\
&\geq \mathbb{E}_{\pi}\left[ \sum_{i=0}^{\infty} \prod_{j=0}^{i-1} 1_{\mathcal{X}\setminus \mathcal{X}_r}(\bm{\phi}^{\bm{x}}_{\pi}(j)) \cdot 1_{\mathcal{X}_r}(\bm{\phi}^{\bm{x}}_{\pi}(i)) \right]\\
&=V(\bm{x}).
\end{split}
\]
 Thus, $v(\bm{x}) \geq V(\bm{x})$.

Therefore, according to Lemma \ref{equal}, we have \[\mathbb{P}_{\pi}(S'_{\bm{x}_0})=V(\bm{x}_0)\leq \epsilon'_1.\]
2) We will prove the ``if'' part.

If  $\mathbb{P}_{\pi}(S'_{\bm{x}_0})\leq \epsilon'_1$, we have $V(\bm{x}_0)\leq \epsilon'_1$ according to Lemma \ref{equal}, where $V(\cdot):\mathbb{R}^n \rightarrow \mathbb{R}$ is the value function in \eqref{value}. Moreover, according to Proposition \ref{pro_bellman}, $V(\bm{x})$ satisfies $V(\bm{x})= \mathbb{E}_{\bm{\theta}}[V(\bm{f}(\bm{x},\bm{\theta}))], \forall \bm{x}\in \mathcal{X}\setminus \mathcal{X}_r$, $V(\bm{x})=1, \forall \bm{x}\in \mathcal{X}_r$, and $V(\bm{x})= 0, \forall \bm{x}\in \mathbb{R}^n\setminus \mathcal{X}$. Consequently, $V(\bm{x})$ satisfies \eqref{constraints1}. 
\end{proof}

\begin{proposition}
\label{almost}
Let $\mathcal X=[-1,1]$ and $\mathcal X_r=[-0.1,0.1]$. Starting from any $x(0)\in \mathcal X\setminus\mathcal X_r$, the system \[ x(l+1)=(-0.5+\theta(l))\,x(l), \] where $\{\theta(l)\}_{l\ge0}$ are independent and identically distributed, each drawn uniformly from the interval $[-1,1]$, leaves $\mathcal X\setminus\mathcal X_r$ in finite time almost surely.   
\end{proposition}
\begin{proof}
Step 1 System Reformulation

We have $x(l+1) = (-0.5 + \theta(l)) x(l)$ with $\theta(l)$ i.i.d. uniform on $[-1,1]$. Then $A(l) = -0.5 + \theta(l)$ is uniform on $[-1.5, 0.5]$. Moreover, $|x(l)| = |x(0)| \prod_{k=0}^{l-1} |A(k)|$.

Step 2. Logarithmic Transformation

Let \(y(l) = \log |x(l)|\). Then:
\[
y(l) = \log |x(0)| + \sum_{k=0}^{l-1} Y_k, \quad \text{where } Y_k = \log |A(k)|.
\]
The sequence \(\{Y_k\}\) is i.i.d.

Step 3. Lyapunov Exponent

Computing the expectation:
\[
\begin{split}
\mathbb{E}[Y_k] &= \frac{1}{2} \int_{-1.5}^{0.5} \log |a| \, da= \frac{1}{2}\big(\int_{0}^{1.5} \log u \, du + \int_{0}^{0.5} \log u \, du\big)
\end{split}
\]
and using \(\int \log u \, du = u \log u - u\), we get:
\[
\begin{split}
\mathbb{E}[Y_k] &= \frac{1}{2} \left( (1.5 \log 1.5 - 1.5) + (0.5 \log 0.5 - 0.5) \right) \\
&= 0.75 \log 1.5 + 0.25 \log 0.5 - 1.
\end{split}
\]
Numerically, \(\log 1.5 \approx 0.405\), \(\log 0.5 \approx -0.693\), so:
\[
\mathbb{E}[Y_k] \approx 0.304 - 0.173 - 1 = -0.869 < 0.
\]
Let \(\lambda = \mathbb{E}[Y_k] < 0\).

Step 4. Almost Sure Convergence

By the strong law of large numbers, we have
\[
\frac{1}{l} \sum_{k=0}^{l-1} Y_k \to \lambda \quad \text{almost surely}.
\]
Hence, \(\sum_{k=0}^{l-1} Y_k \to -\infty\) almost surely as $l \to +\infty$, so:
\[
y(l) = \log |x(0)| + \sum_{k=0}^{l-1} Y_k \to -\infty \quad \text{almost surely}
\]
as $l \to +\infty$. Therefore, \(|x(l)| \to 0\) almost surely as $l \to +\infty$.

Step 5. Exit from the Set $\mathcal{X}\setminus \mathcal{X}_r$

For any \(x(0) \in \mathcal{X}\setminus \mathcal{X}_r = [-1,1] \setminus [-0.1,0.1]\), we have \(|x(0)| \geq 0.1\). Since \(|x(l)| \to 0\) almost surely as $l \to +\infty$, there exists almost surely a finite time \(N\) such that \(|x(N)| < 0.1\), i.e., \(x(N) \notin \mathcal{X}\setminus \mathcal{X}_r\).

The system eventually exits the set \(\mathcal{X}\setminus \mathcal{X}_r\) from every initial state in \(\mathcal{X}\setminus \mathcal{X}_r\) almost surely.

This completes the proof.
\end{proof}
\end{document}